\title{Visualization and Selection of Dynamic Mode Decomposition Components for Unsteady Flow}
\author{Tim Krake$^{1,2}$, Stefan Reinhardt$^{1,2}$, Marcel Hlawatsch$^{1}$, Bernhard~Eberhardt$^{2}$, and~Daniel~Weiskopf$^{1}$
\\
\\
 $^{1}$VISUS, University of Stuttgart, Germany\\
 $^{2}$Hochschule der Medien, Germany.
}
\abstract{
Dynamic Mode Decomposition (DMD) is a data-driven and model-free decomposition technique.
It is suitable for revealing spatio-temporal features of both numerically and experimentally acquired data.
Conceptually, DMD performs a low-dimensional spectral decomposition of the data into the following components:
The modes, called DMD modes, encode the spatial contribution of the decomposition, whereas the DMD amplitudes specify their impact.
Each associated eigenvalue, referred to as DMD eigenvalue, characterizes the frequency and growth rate of the DMD mode. 
In this paper, we demonstrate how the components of DMD can be utilized to obtain temporal and spatial information from time-dependent flow fields.
We begin with the theoretical background of DMD and its application to unsteady flow. 
Next, we examine the conventional process with DMD mathematically and put it in relationship to the discrete Fourier transform. 
Our analysis shows that the current use of DMD components has several drawbacks.
To resolve these problems we adjust the components and provide new and meaningful insights into the decomposition:
We show that our improved components describe the flow more adequately.
Moreover, we remove redundancies in the decomposition and clarify the interplay between components, allowing users to understand the impact of components.
These new representations ,which respect the spatio-temporal character of DMD, enable two clustering methods that segment the flow into physically relevant sections and can therefore be used for the selection of DMD components. 
With a number of typical examples, we demonstrate that the combination of these techniques allow new insights with DMD for unsteady flow.
} 
\keywords{Dynamic Mode Decomposition, spectral decomposition, visualization of components, selection of components}
\theoremstyle{nonumberplain}
\newtheorem{theorem}{Theorem}
\newtheorem{lemma}{Lemma}
\theoremstyle{nonumberplain}
\newtheorem{proof}{Proof}
\begin{document}

\maketitle

\section{Introduction}
Dynamic Mode Decomposition (DMD) is a data-driven and model-free technique to decompose complex flows into fundamental spectral components.
These components correspond to spatio-temporal features that characterize periodicity, damping, (temporal) segmentation, and long-time behavior of the flow.
Basically, the algorithm results in three components: DMD modes, amplitudes, and eigenvalues.
Whereas the modes represent spatial contribution to the flow and the amplitudes specify their impact, each associated eigenvalue characterizes the temporal development.
The objective of this paper is to gain a better understanding of these components such that a more insightful analysis is achieved.
Moreover, as the visualization community has not paid much attention to DMD so far, we want to make DMD more accessible for both its users and the visualization research community.

DMD is supposed to identify spatial patterns associated with frequencies and growth rates that determine the behavior of a system.
So far, the investigation via DMD has been performed by the study of individual DMD components.
In addition, spatial and temporal properties of components are treated independently.
Since DMD is based on the interplay of spatio-temporal components, this traditional analysis process is insufficient.
It has several negative implications:
First, the relevance of the components to the entire system is not clearly specified.
Thus, an appropriate selection of components (for the analysis process) is not possible.
Second, the existing DMD visualizations could be misleading as the mutual dependencies of the components are not taken into account.

To address these problems, we focus on the representation of the components and their visualization. 
Our approach is guided by the needs of unsteady flow but could be extended to general time-dependent grid-based data.
Figure~\ref{pic_overview} illustrates the analysis process following our approach.
Our contributions can be summarized as follows:
\begin{itemize}
\item Conceptual contribution: We clarify drawbacks of the traditional DMD components and provide a new perspective on DMD based on a comparison with the discrete Fourier transform (DFT).
\item New visualizations: We improve DMD components and their representations using novel visualizations that respect the spatio-temporal character of DMD.
\item Data analysis contribution: We introduce two clustering approaches to aggregate components that segment the flow into physically relevant sections and can therefore be used for the selection of DMD components. 
\end{itemize}

We also discuss the mathematical foundation of DMD by providing a derivation and a specific formulation of DMD.
Moreover, with artificial and simulated examples, we show that our approach is able to identify characteristic features of unsteady flow fields.

\section{Related Work}
The visualization and analysis of unsteady flow are a challenging research topic.
A variety of decomposition techniques has been proposed to extract different kind of features from a flow.
The characteristics of a feature strongly depend on the method and are often difficult to define.
In the context of unsteady flow, we distinguish between two types of decomposition techniques.

The first type directly operates on the vector field, such as the Helmholtz Hodge decomposition (HHD) \cite{6365629} or the Morse decomposition \cite{Zhang2007}.
The HHD decomposes the vector field into two spatial components that are divergence- and curl-free, respectively.
Recent work~\cite{676} deals with an extension of the HHD based on Fourier transformation.
Wiebel et al.~\cite{4293009} propose a similar decomposition into a potential flow from the boundary and a localized flow to extract features. 
Rojo and G\"unter's~\cite{Rojo:2020VFTopoUnsteady} splitting method decomposes the flow into a steady and ambient part, enabling the description of the motion of topological elements and feature curves.
However, these decomposition techniques do not encode temporal patterns like DMD.
The Morse decomposition divides a vector field into disjoint invariant sets, called Morse sets. 
The connection of those Morse sets is illustrated by a directed graph giving an overview of the topological skeleton of the flow field.
While this approach highlights only spatial relations of the decomposition, we address spatio-temporal patterns. 
Bujack et al.'s state-of-the-art report \cite{Bujack:2020:PhysByMath} interprets physical features of several decomposition techniques of that type in terms of mathematical properties. 
We follow a similar approach, however, DMD is of another type and therefore extracts different features.

The second type of decompositions makes use of temporal coherence by performing the decomposition on the full time series, instead of considering each step individually.
Besides DMD, Principal Component Analysis (PCA) \cite{lumley67, berkooz:1993:pod_flows}, also kwown as Proper Orthogonal Decomposition (POD), is a technique of this type.
PCA hierarchically decomposes the data into an orthogonal basis of spatially correlated modes, called principal components (or POD modes), modulated by appropriate random time-coefficients. 
Therefore, the dynamic information is often neglected and particular emphasis is put on the spatial components.
Pobitzer et al's work~\cite{Pobitzer:2011:PODplus} deals with an extension of POD based on feature detectors.
The visualization of POD components and the additional use of feature detectors do not take spatio-temporal properties of the decomposition into account, which is again the main difference to our DMD approach.

DMD was introduced by Schmid and Sesterhenn \cite{schmid:2008:APS}. 
Schmid~\cite{schmid:2010:dmd_numerical_data} improved the DMD algorithm by using a reduced singular value decomposition (SVD).
Tu et al.~\cite{tu:2014:on_dmd_theorey_and_app} formulated the latest version of DMD, called exact DMD. 
As described in the introduction, the components of DMD are used separately and important spatio-temporal relations are not taken into account, especially not for the visualization.
In this way, DMD has been applied on diverse flow setups, e.g., the analysis of wake flows \cite{tu:2014:on_dmd_theorey_and_app, bagheri:2013:KoopamnCylinderWake, zhang:2014:idenfication_of_cs_using_pod_and_dmd, sampath:2014:pod_and_dmd_of_time_resolved_piv}, cavity flows \cite{seena:2011:dmd_of_turbulent_cavity_flows,lusseyran:2011:flow_coherent_structures}, mixing layer flow \cite{sayadi:2012:dmd_of_h_type, sayadi:2013:dmd_of_controlled_h_and_k_type}, and jet flows \cite{rowley:2009:spectral_nonlinear_flow, schmid:2011:applications_of_dmd}.
In the context of Lagrangian coherent structures, the interaction of DMD and Finite Time Lyapunov Exponent (FTLE) was considered to provide a feature-based description of the entire flow field \cite{ali:2017:turbulent_boundary_layer_lcs_pod_dmd, weheliye:2018:dmd_pod_lcs, Nair:2017:lcs_dmd}.
Kou and Zhang \cite{KOU:2017:ModeCriterion} propose a new criterion for the selection of dominant modes.
However, this approach uses traditional DMD components and is based on the energy over the full time.
Regarding visualization and computer graphics, DMD was used for background/foreground video separation \cite{kutz:2017:video_separation}, background modeling  \cite{pendergrass:2017:dmd_for_background_modeling, erichson:2016:compress_dmd_for_background_modeling}, edge detection \cite{bi:2017:dmd_edge_detection}, and the visualization of large-scale power systems \cite{mohapatra:2016:dmd_large_scale_power_systems}.
However, all above mentioned publications use the traditional DMD components.
With our novel visualizations of the improved components, we show that our techniques overcome drawbacks of the conventional DMD approach for visual flow analysis.

\begin{algorithm}[t]
	\caption{Exact Dynamic Mode Decomposition}\label{Algorithm_DMD}
    \begin{algorithmic}[1]
        \Function{DMD}{$x_0,\dots,x_m$}
		\State $X = \begin{bmatrix} x_0 & \dots & x_{m-1} \end{bmatrix}, Y = \begin{bmatrix} x_1 & \dots & x_{m}	\end{bmatrix}$
		\State Calculate the reduced SVD $X = U \Sigma V^*$ with $\textnormal{rank}(X) = r$.
		\State Calculate $S = U^* Y V \Sigma^{-1}$.
		\State Calculate $\lambda_1,\dots,\lambda_r$ and $v_1,\dots,v_r$ of $S$.
		\For {$\lambda_i \neq 0$}
		\State $\vartheta_i = \frac{1}{\lambda_i} Y V \Sigma^{-1} v_i$
		\EndFor
		\State $\Lambda = \text{diag}(\lambda_1,\lambda_2,\dots,\lambda_{r_0})$ with $\lambda_1, \lambda_2, \dots, \lambda_{r_0} \neq 0$
		\State $\Theta = \begin{bmatrix} \vartheta_1 &  \vartheta_2 & \dots & \vartheta_{r_0} \end{bmatrix}$
		\State Calculate $a = \Lambda^{-1} \Theta^+ x_1$ with $a = (a_1,\dots,a_{r_0})$
		\State Calculate if exist $c_0 = -\sum_{j=1}^{r_0} \frac{1}{\lambda_j} \prod_{k \neq j} \frac{1}{\lambda_j - \lambda_k}$
		\State\Return  $\lambda_j, a_j, \vartheta_j, c_0$
		\EndFunction
	\end{algorithmic}
\end{algorithm}

\begin{figure*}[t]
	\centering
	\includegraphics[width=\linewidth]{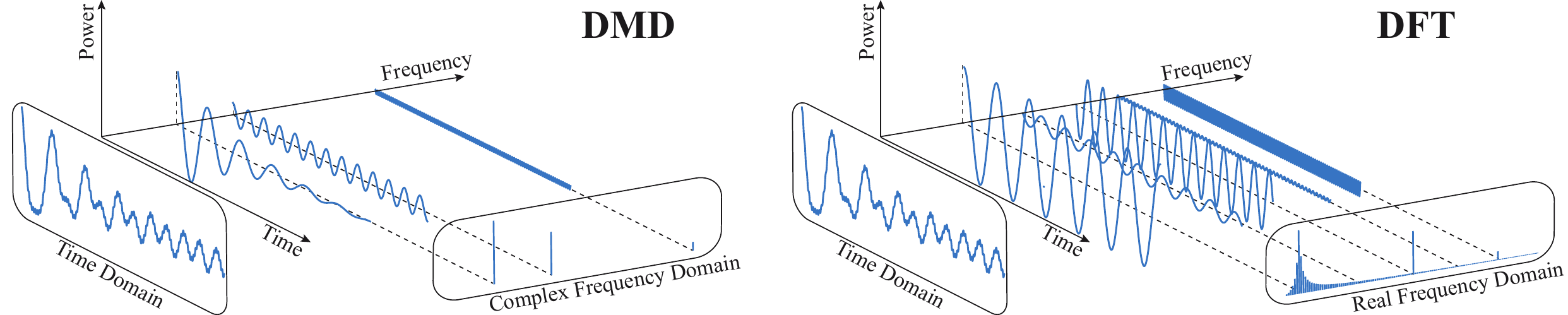}
	\caption{
		A comparison of DMD (left) with DFT (right). 
		Both methods lead to a structurally similar decomposition.
		However, DMD computes its own complex frequencies (depending on a frequency and growth rate) by the eigenvalues of a least-squares fit matrix, whereas DFT uses fixed real frequencies given by the (complex-valued) roots of unity.
		The components of DMD can therefore converge or diverge over time, which creates new opportunities for investigating data as the simple complex frequency domain highlights.
		} 
	\label{pic_DMD_DFT}
\end{figure*}

\section{Mathematical Foundation}
We concentrate on time-dependent flow fields defined on a grid and sampled uniformly in time.
More precisely, for a 2D velocity field on a grid with $N$ points at time $t_k$, the data is characterized by $u_1(t_k), v_1(t_k), \dots, u_N(t_k), v_N(t_k) \in \mathbb{R}$, where $u$ and $v$ represent the velocity components.
For DMD, the snapshots $x_0,\dots,x_m \in \mathbb{R}^{2N}$ are given by
\begin{equation*}
\begin{bmatrix}
|	& |		& 		& |			\\ 
x_0 &  x_1 	& \dots & x_{m}	\\
|	& |		& 		& |			\\ 
\end{bmatrix}
=
\begin{bmatrix}
u_1(t_0)	& 	u_1(t_1)		& 		& u_1(t_m)	\\ 
v_1(t_0) 	& 	v_1(t_1) 	& \dots 	& v_1(t_m)	\\
\vdots & \vdots & \ddots & \vdots \\
u_N(t_0)	&	u_N(t_1)		& 		& u_N(t_m)	\\ 
v_N(t_0) 	&   v_N(t_1) 	& \dots 	& v_N(t_m)	\\
\end{bmatrix}.
\end{equation*}
A 3D flow field is defined analogously with three instead of two components.

Before formulating the algorithm of DMD, we summarize the concepts of DMD \cite{kutz:2016:DMD_Book,krake:2019:DMD} and compare them with DFT.
These aspects help understand the visualization techniques and their principles.

\subsection{Derivation of DMD}
DMD performs a spectral decomposition on arbitrary data. 
Therefore, we generally consider complex-valued snapshots $x_0, x_1, \dots, x_m \in \mathbb{C}^{n}$, which are usually uniformly sampled in time.
The size of a snapshot is typically substantially greater than the number of snapshots, i.e., $n \gg m$.
For the snapshots, we consider the following minimization problem:
\begin{equation} \label{eq:basic_minimization_problem}
\min_{A \in \mathbb{C}^{n \times n}}  \sum_{j=0}^{m-1} \lVert A x_j - x_{j+1} \rVert_2^2~,
\end{equation} 
where $\lVert \cdot \rVert_2$ denotes the Euclidean norm. 
In other words, we search a matrix $A$ that optimally connects the subsequent snapshots in a least squares sense.
The idea of DMD is to calculate a low-dimensional representation of $A$ and to perform an eigenvalue decomposition on it to detect frequency patterns in the data.
To this end, we insert the snapshots as column vectors into the following two matrices:
\begin{equation} \label{eq:data_matrix}
X = \begin{bmatrix}
|	& 		& |			\\ 
x_0 & \dots & x_{m-1}	\\
|	& 		& | 
\end{bmatrix},
\quad
Y = \begin{bmatrix}
|	& 		& |			\\ 
x_1 & \dots & x_{m}		\\
|	& 		& |
\end{bmatrix}.
\end{equation} 
The optimization problem in Equation~\ref{eq:basic_minimization_problem} can now be rewritten as $\min_{A \in \mathbb{C}^{n \times n}}  \lVert A X - Y \rVert_F^2~$, where $\lVert M \rVert_F \coloneqq (\sum_{i = 1}^n \sum_{j = 1}^m \lvert m_{ij} \rvert)^{1/2}$ denotes the Frobenius norm. 
Consequently, a best-fit matrix is explicitly given by
\begin{equation}
A = YX^+ \in \mathbb{C}^{n \times n}~,
\end{equation}
where $X^+$ is the Moore-Penrose pseudoinverse, a generalized inverse \cite{moore:1920:moore_penrose} of the non-square matrix $X$.
Note that $A$ is a (large) matrix of size ${n \times n}$.
We assume that $A$ is diagonalizable (which is almost always the case, if $n \gg m$).
This means that there exists an invertible matrix $V = \begin{bmatrix} v_1 & v_2 & \dots v_n \end{bmatrix} \in \mathbb{C}^{n \times n}$ consisting of eigenvectors, as well as a diagonal matrix $\Lambda = \text{diag}(\lambda_1,\lambda_2,\dots,\lambda_n) \in \mathbb{C}^{n \times n}$ with corresponding eigenvalues, such that $V^{-1} A V = \Lambda$.
Therefore, we can approximate the $k$th snapshot by
\begin{equation} \label{eq:basic_decomposition}
x_k \approx A^k x_0 
= V \Lambda^k V^{-1} x_0 
= \sum_{j=1}^n b_j \lambda_j^k v_j~, 
\end{equation}
where $b = \begin{pmatrix} b_1 & b_2 & \dots & b_n \end{pmatrix}^T = V^{-1} x_0$. 
Note that entries of $b$ are coefficients of the linear combination of $x_0$ in the eigenvector basis.

Given that $n \geq m$, the rank of $A$ cannot be higher than $m$ due to the dimension restriction and, hence, the number of non-zero eigenvalues is at most $m$.
Consequently, the dynamic behavior will be captured by $m$ summands, consisting of the triple $(b_j,\lambda_j,v_j) \in \mathbb{C} \times \mathbb{C} \times \mathbb{C}^n$ for $j=1,2,\dots,m$.
DMD calculates exactly these (non-zero) eigenvalues and eigenvectors referred to as DMD eigenvalues and DMD modes, respectively.
For simplicity, we denote them as eigenvalues and modes throughout the rest of the paper.
The coefficients $b_1,\dots,b_m$ in the decomposition of Equation~\ref{eq:basic_decomposition} need to be computed by DMD in a different way.
We call them DMD amplitudes.
However, as the modes usually do not form a basis, an error will occur in the reconstruction of the first snapshot $x_0$. 

\subsection{DMD Algorithm} \label{ssec:DMD_algorithm}
Algorithm~\ref{Algorithm_DMD} shows the DMD method derived from the previous section. 
However, it differs in the calculation of amplitudes from the standard literature. 
The new definition of amplitudes, which uses the second snapshot instead of the first one for the reconstruction, was introduced by Krake et al. \cite{krake:2019:DMD}.
We make use of this new formulation by improving the representation of DMD components and integrating them to arrive at novel and more adequate DMD visualizations.

In Algorithm~\ref{Algorithm_DMD}, a triple $(a_j,\lambda_j,\vartheta_j)$ similar to the one in Equation~\ref{eq:basic_decomposition} is determined without explicitly computing $A$. 
This is achieved by using a reduced SVD applied to the data-matrix $X$, which results in two unitary matrices $U \in \mathbb{C}^{n \times r}$ and $V \in \mathbb{C}^{m \times r}$ (which are real-valued for real-valued data) as well as a real-valued diagonal matrix $\Sigma \in \mathbb{R}^{r \times r}$  with $r = \text{rank}(X)$ (lines 2--3).
The low-dimensional representation of $A$ is given by (line 4)
\begin{equation} \label{eq:matrixS}
S = U^* A U = U^* Y V \Sigma^{-1}~.  
\end{equation}
Next, we compute the eigenvalues $\lambda_j$ and eigenvectors $v_j$ of $S$ for $j = 1,\dots,r$ (line 5).
Then, we transform the eigenvectors with non-zero eigenvalues into modes $\vartheta_j$ (lines 6--8). 
Finally, the amplitudes $a_j$ and the error scaling $c_0$ are calculated (lines 9--12).
By this choice of amplitudes, Krake et al. \cite{krake:2019:DMD} proved the following reconstruction property:
\begin{equation} \label{eq:theroem} 
x_0 = \sum_{j=1}^m {a_j \vartheta_j} + c_0 \cdot q~,
\qquad
x_k = \sum_{j=1}^m {\lambda_j^k a_j \vartheta_j}~,
\end{equation}
for $k = 1,\dotso,m$ and $q = x_m - XX^+ x_m$.
This property holds when both $x_0,\dotso,x_{m-1}$ and $x_1,\dots,x_m$ are linearly independent and $\lambda_1,\dots,\lambda_m$ are distinct (in this case $r_0 = r = m$).
The assertion states that DMD inherits the property of Equation~\ref{eq:basic_decomposition} with appropriate coefficients.
This version of DMD captures the entire system by providing a structured spectral decomposition into temporal and spatial aspects.
Thus, we are able to clarify the impact of (aggregated) components precisely.
Moreover, the interplay of the eigenvalues, modes, and amplitudes can be interpreted in a new and clearer way.
Based on these insights, we create novel visualizations that respect the spatio-temporal character of DMD.

\subsection{DMD and DFT} \label{ssec:DMD_DFT}
Discrete Fourier transform (DFT) is a well-understood tool to analyze time-dependent data.
It is able to extract frequency-based features like periodicity. 
Since DFT and DMD lead to a structurally similar decomposition of data, it is possible to translate properties and procedures from DFT to DMD.
More precisely, DFT yields 
\begin{equation} \label{eq:DFT}
x_k = \sum_{j = 0}^m \mu_j^k  \hat{x}_j~,
\end{equation}
where $\mu_j = e^{\frac{2 \pi j}{m+1}}$ are the roots of unity and $\hat{x}$ is the Fourier transform.
Thus, the DFT converts 1D snapshots $x_k$ into complex numbers $\hat{x}_j$ (for vector-valued data $x_k$ into complex vectors $\hat{x}_j$) each depending on a root of unity $\mu_j$.
Using the exponential form, i.e., $\mu_j = e^{i \varphi_j}$, we observe that the components $\mu_j$ have magnitude $1$.
Thus, they are only determined by a real frequency $f_j = \frac{\varphi_j}{2\pi}$.
This leads to a real frequency domain representation as shown in Figure~\ref{pic_DMD_DFT} on the right side, where the magnitude of a summand $\mu_j^k \hat{x}_j$ does not change over time.

In contrast, DMD computes the triplets: eigenvalues $\lambda_j$, modes $\vartheta_j$, and amplitudes $a_j$. 
A direct comparison of the components is difficult since the interplay of DMD components is more complex than for DFT.
Nonetheless, we can bring DMD and DFT in accordance using Equations \ref{eq:theroem} and \ref{eq:DFT}.
In the decompositions, the temporal components are given by $\lambda_j$ and $\mu_j$, respectively.
Each set of temporal components characterize a decomposition entirely as the spatial contributions, i.e., the Fourier transformed vectors $\hat{x}_j$ or the scaled modes $a_j \vartheta_j$, are simply fitted to those in a unique way.
Therefore, the decompositions only differ in the choice of temporal components.
In fact, the eigenvalues $\lambda_j$ computed by DMD are variable, whereas the DFT uses roots of unity $\mu_j$. 
Hence, the eigenvalues $\lambda_j = r_j e^{i \varphi_j}$ are characterized by both a frequency $f_j = \frac{\varphi_j}{2\pi}$ and a magnitude $r_j$, or, in other words, by a complex frequency.
Thus, DMD produces a complex frequency representation of the data as illustrated in Figure~\ref{pic_DMD_DFT} on the left side.

The DMD procedure can be summarized as a two-stage method:
First, DMD computes appropriate complex frequencies based on the data.
Then, the data is transformed into complex numbers (for vector-valued data into complex vectors) that depend on those complex frequencies.
From a certain point of view, DMD can thus be seen as an extension of DFT.  
This opens up new possibilities to extract complex frequency-based features, like periodicity, damping, and temporal segmentation.
In Figure~\ref{pic_DMD_DFT}, we observe that DMD needs three (non-vanishing) complex frequencies to reconstruct the signal, whereas DFT a number of real frequencies.
In a more complex scenario, the challenge is to find relevant DMD components. 
This task is more complicated than in the case of DFT, since the interplay of eigenvalues $\lambda_j$, modes $\vartheta_j$, and amplitudes $a_j$ needs to be taken into account in order to respect the spatio-temporal character of DMD.

Starting with the eigenvalues $\lambda_j$ (complex frequencies), these should be used to highlight the impact of modes over time.
Previously, the modes $\vartheta$ should be adjusted by their amplitudes $a_j$ according to DFT such that the scaled modes $a_j \vartheta_j$ (analogously to the Fourier-transformed vectors) characterize the spatial contribution and the impact.
Based on these observations, we provide improved components and new visualizations clarifying the interplay of components.
Moreover, a cluster method is developed on the principles of DFT (aggregation of harmonics) that is used for the selection of relevant DMD components.

\begin{figure}[t]
	\centering 
	\includegraphics[width=\columnwidth]{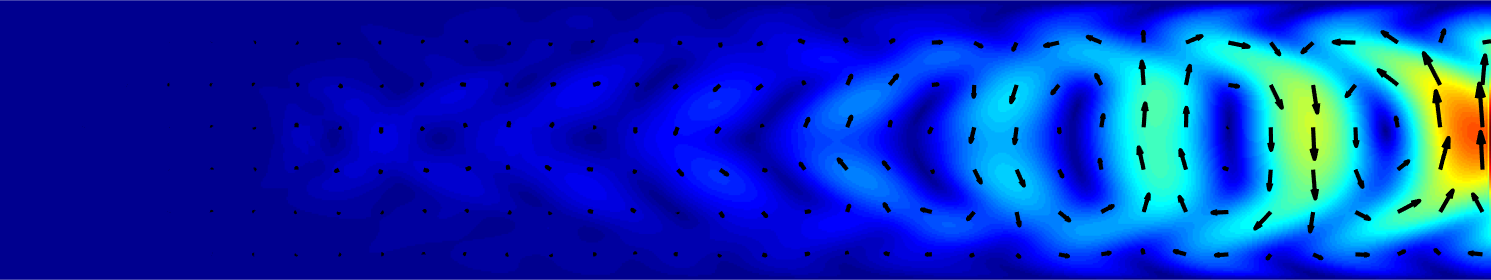}
	\caption{
	Traditional visualization of a 2D mode $\vartheta_j$ with arrow glyphs and color-coded velocity magnitude (blue = low, red = high).
	}
	\label{pic_karman_modes_old}
\end{figure}

\section{Investigation Approach} \label{sec:4:VisTech} 
In this section, we first describe the conventional DMD approach with the traditional DMD components and visualizations by highlighting their benefits and drawbacks.
Then, we show how these components can be improved and visualized in a novel way to resolve these issues.
On this basis, we present two clustering methods that segment the flow into physically relevant sections.
Finally, an approach for the selection of DMD components is proposed.

\begin{figure}[t]
	\centering
    {\includegraphics[width=0.63\columnwidth]{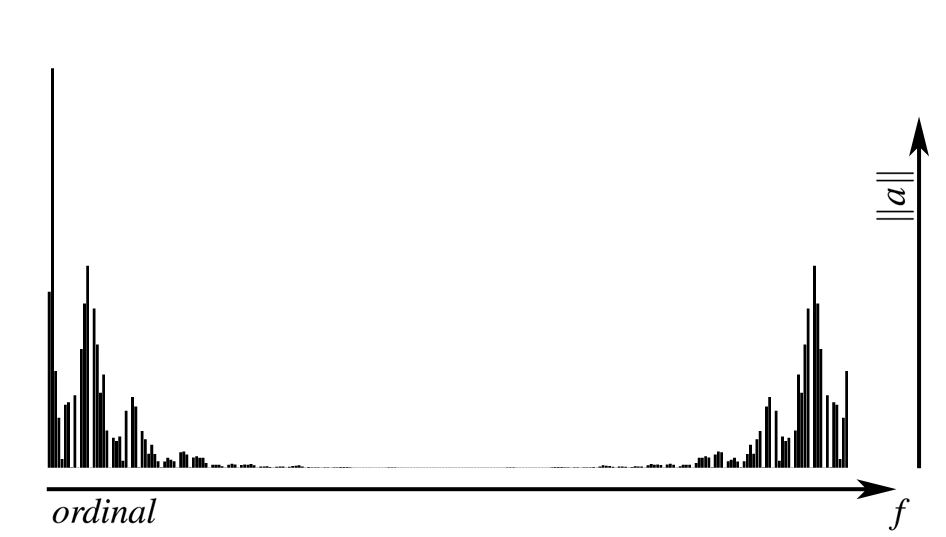}}
	\hfill
    {\includegraphics[width=0.36\columnwidth]{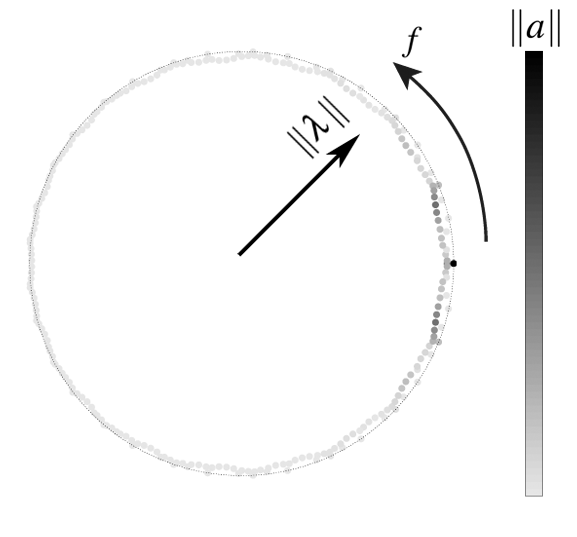}}
	\caption{
	The traditional visualization of the amplitudes (left) and eigenvalues (right) for the 2D Karman vortex street.
	The absolute values of amplitudes $\lvert a_j \rvert$ are visualized in a bar diagram that is sorted by the frequency of their  corresponding eigenvalues. 
	The visualization neither takes temporal aspects nor connections to the eigenvalue visualization into account.
	The eigenvalues are represented in the complex plane encoded by their corresponding amplitudes, where light gray indicates insignificant components.
	For real-valued data (which is the case for flow data), the eigenvalues occur in complex conjugate pairs.
	Hence, redundant information is visualized.
	Moreover, the radial representation has several drawbacks for the identification of patterns.
	}
	\label{pic_karman_old_ew_dom}
\end{figure}

\subsection{Traditional DMD Components}
In the following, we point out the pros and cons of the conventional DMD approach. 
DMD computes the following triplets: modes $\vartheta_j$, amplitudes $a_j$, and eigenvalues $\lambda_j$. 
Traditionally, the three groups of components are mainly considered and visualized separately.

\paragraph{\textbf{DMD Modes}}
The modes represent the spatial contribution to the flow field and are supposed to display local and global features, like symmetry, mixing, transient response, and long-time behavior.
Each entry of a vector-valued mode $\vartheta_j$ corresponds to a spatial location of the velocity field as shown in Figure~\ref{pic_karman_modes_old}.
For the investigation with DMD, traditionally, the modes are considered individually.
However, as DMD is based on the superposition principle, modes that are inspected must be selected carefully. 
Otherwise, the validity of spatial features is not ensured. 
Moreover, the modes are complex-valued and therefore mostly visualized by their real and imaginary part such that an interpretation is even more complicated.
A noteworthy mode is the one with corresponding eigenvalue $\lambda_j = 1$. 
If it exists, it represents the constant flow, often called time-averaged flow, which does not change over time.

\paragraph{\textbf{DMD Amplitudes}}
The amplitude $a_j$ determines the influence of the mode $\vartheta_j$ to the flow.
As mentioned before, the amplitudes are traditionally computed by $a=\Theta^+x_0$ instead of $a=\Theta^+x_1$ (compare Algorithm~\ref{Algorithm_DMD}) such that a reconstruction like in Equation~\ref{eq:theroem} does not hold.
These complex numbers are then visualized by their absolute value, i.e., $\lvert a_j \rvert$, in a bar diagram sorted by the frequency of their corresponding eigenvalues as illustrated in Figure~\ref{pic_karman_old_ew_dom} on the left.
The visualization gives an overview of the distribution of modes' influence.
On that basis, the selection of relevant modes is performed.
More precisely, a choice of the $k$-th most dominant modes is made by the $k$ highest absolute values of amplitudes.
This approach does not take temporal patterns into account, since the amplitudes only precisely reflect the influence of modes at time $t_0$.
Additionally, the influence is not precisely exact for the traditional amplitudes and the corresponding modes are not necessarily normalized.
Therefore, a misleading choice of modes could be made that do not represent any features of the flow, especially the ones that evolve over time.
Moreover, redundant components are highlighted in Figure~\ref{pic_karman_old_ew_dom} (as we will prove later) that may lead to unnecessary selection and visualization of modes.

\begin{figure}[t]
	\centering
	\includegraphics[width=\columnwidth]{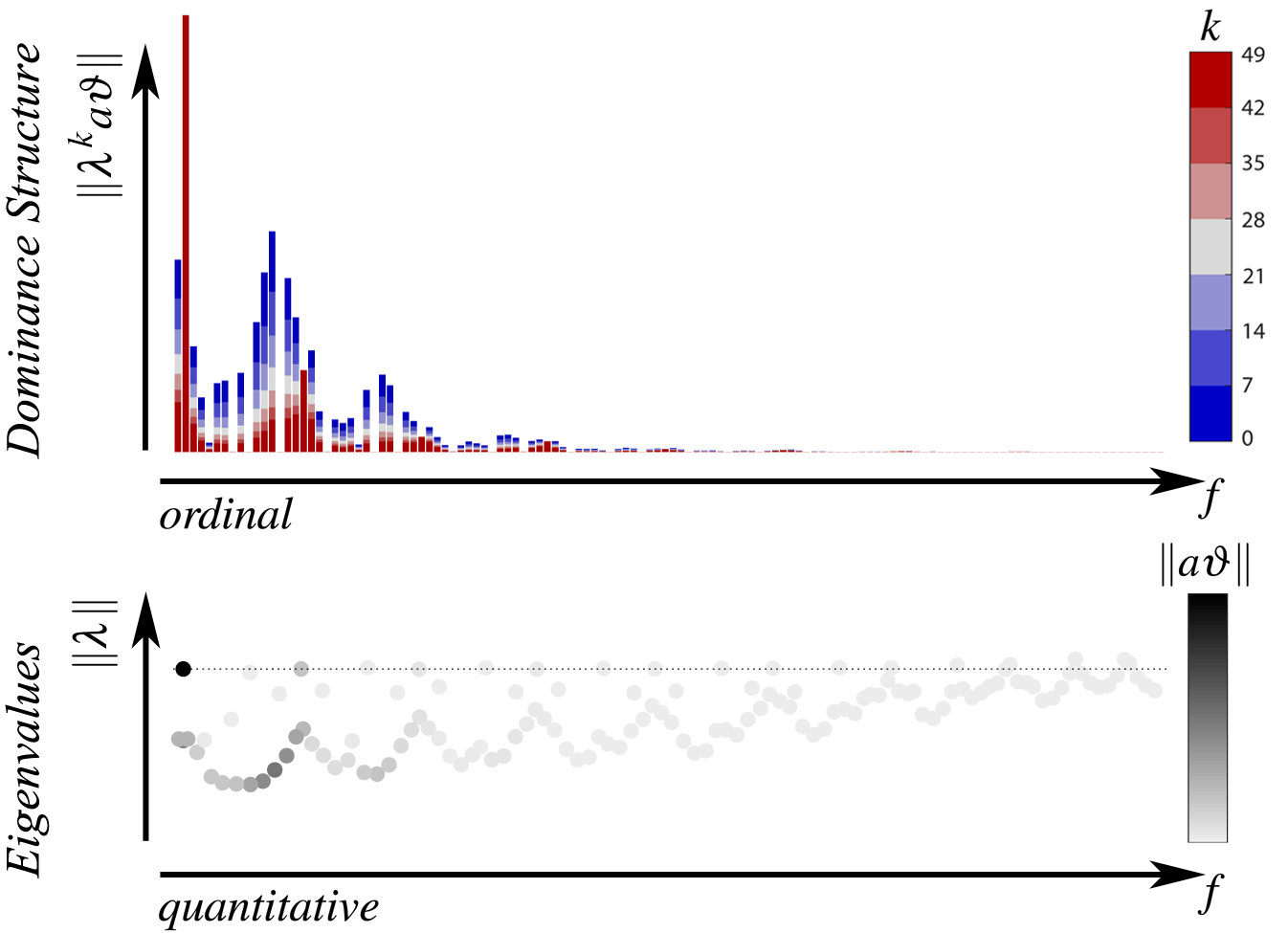}
	\caption{Our proposed visualizations of the eigenvalues (bottom) and the dominance structure (top) for the 2D Karman vortex street.
	The non-redundant eigenvalues are represented by their frequencies (arguments) and growth rates (magnitudes) and are grayscaled by the norms of their associated scaled modes, where light gray indicates insignificant components.
	The dominance structure visualizes the influence of non-redundant scaled modes.
	For this, the norms of non-redundant scaled modes are illustrated in a bar diagram that is sorted by the frequencies of their corresponding eigenvalues.
	In addition, the temporal development of each scaled mode at specific time steps is integrated into the representation.
		}
	\label{pic_karman_new_ew_dom}
\end{figure}

\paragraph{\textbf{DMD Eigenvalues}}
An eigenvalue of a corresponding mode describes its temporal behavior. 
The position within the complex plane provides information about frequency and growth rate. These quantities are given by their arguments and magnitudes, respectively.
Usually, the eigenvalues are (gray) scaled by the absolute value of their corresponding amplitudes, as illustrated in Figure~\ref{pic_karman_old_ew_dom} on the right.
For real-valued data, the eigenvalues occur in complex conjugate pairs, i.e., if $\lambda = r \; e^{i\varphi}$ is an eigenvalue, then $\overline{\lambda} = r \; e^{-i\varphi}$ is an eigenvalue, too. 
Therefore, the traditional visualization in the complex plane, as shown in Figure~\ref{pic_karman_old_ew_dom} on the right, displays redundant information.

Using the exponential form $\lambda = r \; e^{i\varphi}$ with frequency $f = \frac{\varphi}{2\pi}$ and magnitude $r$ as well as Equation~\ref{eq:theroem}, the eigenvalues can be categorized in the following way:
\begin{itemize}
	\item $r_j<1$: These components describe transient responses, because the potentiation of the eigenvalues will cause the component to vanish.
	\item $r_j>1$: Such components show contrary behavior: due to the potentiation of the eigenvalue, the components will grow and diverge.
	\item $r_j=1$: In this case, potentiation of the eigenvalue will cause a rotation on the unit circle with a specific frequency. 
	These components characterize the steady state.
\end{itemize}
Even though the categorization into these three cases is possible and the interpretation of eigenvalue potentiation is more accessible (see Figure~\ref{pic_karman_old_ew_dom}), the detection of recurring and salient patterns suffer from the radial representation.

\subsection{Improved DMD Components} 
In this subsection, we present an improvement to the conventional DMD components and visualizations that resolves the issues explained in the previous subsection.

\paragraph{\textbf{DMD Eigenvalues}}
For real-valued data (as given in many applications), the eigenvalues occur in complex conjugate pairs (compare Figure~\ref{pic_karman_old_ew_dom} on the left).
This well known fact follows from a real-valued SVD resulting in the real-valued matrix $S$ (see Equation~\ref{eq:matrixS} or appendix).
Therefore, the representation of the eigenvalues can be restricted to the upper half plane (complex numbers with non-negative imaginary part), which hides redundant components and leads to a clearer view.
Furthermore, the visualization in the complex plane has some disadvantages as well.
It is difficult to detect recurring and salient patterns due to the radial representation of the eigenvalues.
Therefore, we propose a characterization with more emphasis on the frequency and growth rate.
To this end, we remove the redundant eigenvalues and plot the remaining ones in a coordinate system with axes set by argument and magnitude, as visualized in Figure~\ref{pic_karman_new_ew_dom} at the bottom.
This representation enhances the identification of repetitive patterns significantly, which is the basis for the cluster methods later on.
In addition, the eigenvalues are grayscaled by the norms of the so-called scaled DMD modes that will be presented in the following.
In short, the norms represent the correct influence of components.

\begin{figure}[t]
	\centering
    \includegraphics[width=\columnwidth]{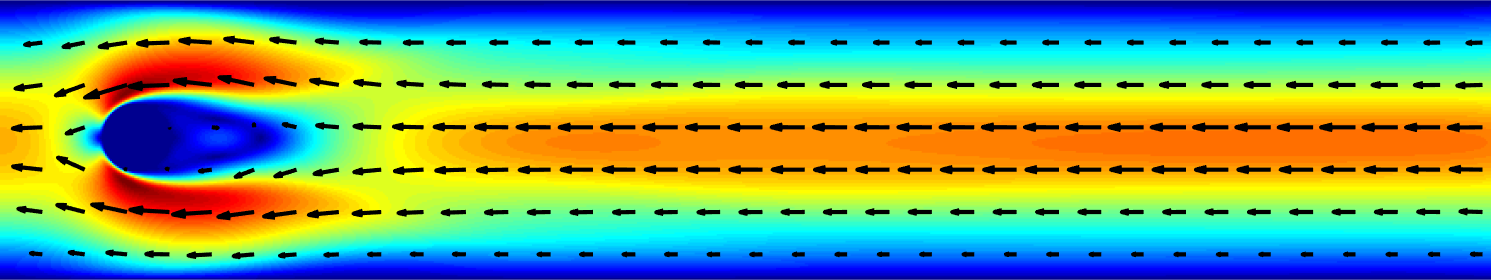}
	\hfill
    \includegraphics[width=\columnwidth]{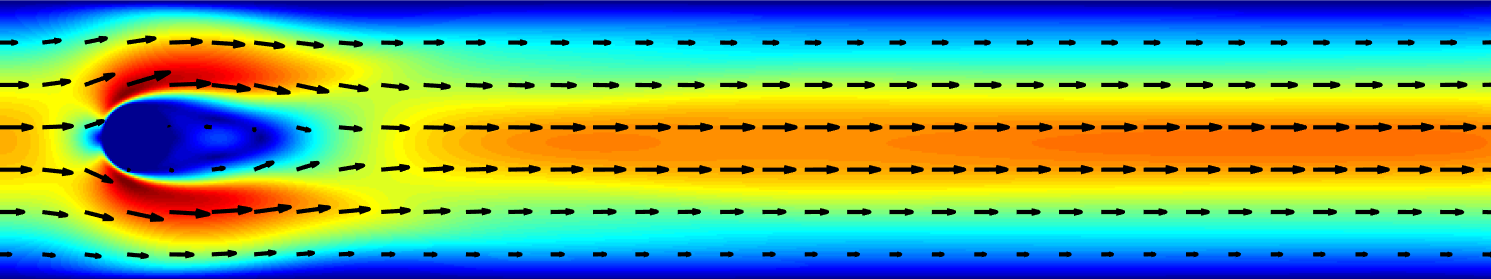}
	\caption{
	The traditional DMD mode to the eigenvalue $\lambda = 1$ is compared to our scaled DMD mode.
	Both are taken from the 2D Karman vortex street data set.
	The velocity vectors of the traditional DMD mode point in the wrong direction, as the inflow is on the left side.
	Scaling the DMD mode resolves this issue. 
	}
	\label{pic_dmd_mode_EW1_compare}
\end{figure}

\paragraph{\textbf{Scaled DMD Modes}}
The DFT can be linked to DMD because DMD can be seen as a two-stage method where eigenvalues computed first and the modes and amplitudes are fitted subsequently to those.
The Fourier-transformed vectors are the counterpart to the modes $\vartheta_j$ multiplied with their amplitudes $a_j$ (see Section~\ref{ssec:DMD_DFT}).
Thus, we propose scaling the modes for the analysis and denoting these new objects $a_j \vartheta_j$ as scaled DMD modes.
Due to this combination, the new representation contains both the spatial contribution to the decomposition and the influence to the system.
Before we discuss these two aspects, some advantageous mathematical properties of the representation are presented.

One important property of the scaled modes is that they occur in complex conjugate pairs, i.e., the scaled modes of a complex conjugate pair of eigenvalues $\lambda, \overline{\lambda}$ are given by $a \vartheta, \overline{a \vartheta}$.
A detailed mathematical proof of this fact can be found in the appendix.
Hence, the superposition of these two scaled modes can be expressed by twice the real part, since $a  \vartheta + \overline{a  \vartheta} = 2 \Re(a  \vartheta)$. 
Furthermore, the spatio-temporal development is given by 
\begin{equation} \label{eq:combining_counterparts}
\lambda^k a \vartheta + \overline{\lambda^k a  \vartheta}
= 2 \Re(\lambda^k a  \vartheta)~, \quad k \in \mathbb{N}~.
\end{equation}
As a result, we can combine these complex conjugate pairs, which facilitates the analysis approach.
In particular, the number of eigenvalues $\lambda_j$ and scaled modes $a_j \vartheta_j$ (with an eigenvalue having a non-zero imaginary part) is reduced by a factor of two.
This aspect is crucial for all following steps.

\begin{figure*}[t]
	\centering
	\includegraphics[width=\textwidth]{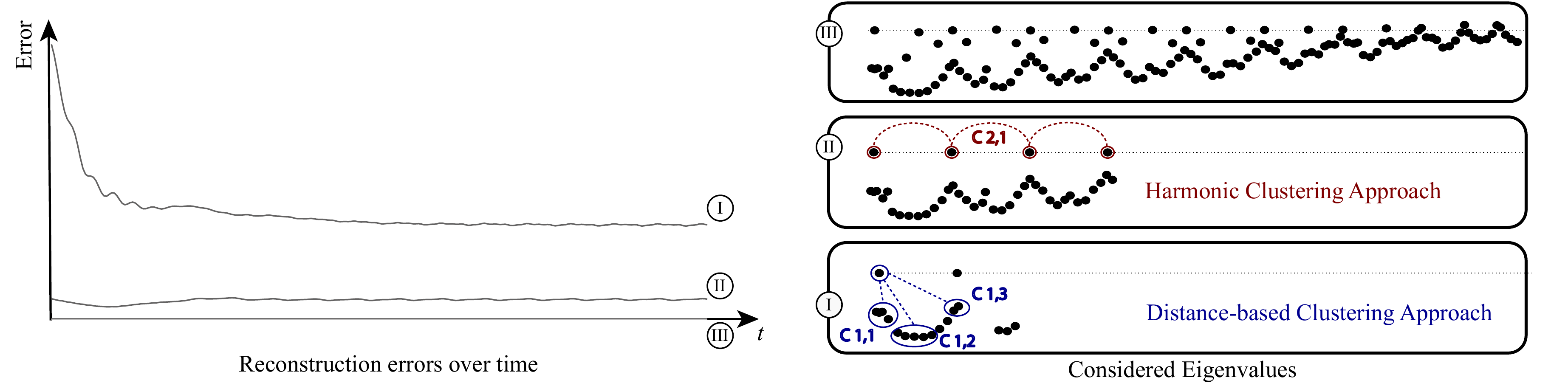}
	\caption{
	The reconstruction error over time (left) of different subsets of components (I, II, and III) from the 2D Karman vortex street data set.
	The eigenvalues of the respective clusters are shown on the right.
	Whereas the subset III consists of all components, leading to an error-free reconstruction of the flow, the subsets I and II are thinned out and used for the cluster approaches that lead to a selection of modes.
	Different thresholds facilitate the identification of patterns and support the clustering of components (red and blue on the right).
	}
	\label{pic_karman_reconst_and_cluster}
\end{figure*}

\textit{(a) Spatial Properties:}
The multiplication of a mode $\vartheta_j$ by its (complex) amplitude $a_j$ corrects its orientation. 
To demonstrate this, the traditional mode to the eigenvalue $\lambda = 1$ is depicted in Figure~\ref{pic_dmd_mode_EW1_compare} (top), often referred to as time-averaged flow.
We observe that the velocity vectors point in the wrong direction, as the flow moves from the left to the right.
Since this mode is real-valued and the only one that has impact on the boundary regions, the flow direction is truly incorrect and therefore the mode does not display the physical phenomena correctly (the time-averaged flow).
This is due to the fact that the modes are eigenvectors, which can be scaled by any complex factor.
To represent real physical properties, a mode $\vartheta_j$ needs to be linked to the data by multiplying it with its amplitude $a_j$.
In Figure~\ref{pic_dmd_mode_EW1_compare} (bottom), the scaled mode to the eigenvalue $\lambda = 1$ is visualized.
Now, the velocities point in the correct direction.
This simple example demonstrates that the combination of amplitudes and modes is crucial for the interpretation of spatial features and improves the fundamental (numerical) representation.

For the visualization of scaled modes, we make use of Equation~\ref{eq:combining_counterparts}.
In the analysis of real-valued data, the scaled modes and, in particular, their temporal development can thus be restricted to the real part, since the imaginary part will vanish in a superposition, as Equation~\ref{eq:combining_counterparts} shows.

\textit{(b) Influence Properties:}
The traditional amplitudes (typically used for the selection of modes) suffer from an imprecise computation as Equation~\ref{eq:theroem} does not hold exactly.
As a result, the influence of modes is not reflected correctly.   
Using the proposed improved representation, i.e., the scaled mode $a_j \vartheta_j$ computed by Algorithm~\ref{Algorithm_DMD}, the influence is given by its norm $\lVert a \vartheta \rVert$.
This formula is in accordance to the influence of Fourier-transformed vectors and is more precise.

Even though a correct choice of the most dominant components is now more likely, the norm only represents the influence at time $t_0$.
A selection of components only based on the norm of scaled modes (or simply on the absolute value of amplitudes) is insufficient as a temporal encoding is not included (compare Figure~\ref{pic_karman_old_ew_dom} on the left).
To integrate this, we propose a visual representation, referred to as dominance structure, that uses both the scaled modes and the eigenvalues.
This representation is illustrated in Figure~\ref{pic_karman_new_ew_dom} (top).
Basically, the norms of the scaled DMD modes $\lVert a \vartheta \rVert$ are visualized in a bar diagram that is sorted by the frequency of their corresponding eigenvalues.
However, it also takes the following aspects into account:
\begin{itemize}
    \item Since the scaled modes occur in complex conjugate pairs (like the eigenvalues), it is sufficient to display one of them, as they have the exact same impact.
    \item Due to the sorting according to the frequency (argument), it is possible to differentiate between the impact of low and high frequencies, which is in the spirit of DFT (see Figure~\ref{pic_DMD_DFT}) and the eigenvalue visualization (see Figure~\ref{pic_karman_new_ew_dom}).
    Therefore, the two visualizations can be linked as both rely on the same quantities. 
	More precisely, the frequency is represented in an ordinal way for the dominance structure (i.e., sequence of numbers) and a quantitative way for the eigenvalues (i.e., exact positions).
    \item The temporal development of a scaled mode $a_j \vartheta_j$ (according to the norm) is integrated into the representation.
    This development is given by the values $\lVert \lambda_j^k a_j \vartheta_j \rVert$ at specific time steps $k$, which are visualized by color-coded bars.
    The corresponding time steps $k$ are illustrated by a color map on the right.
    For the sake of visibility, a color-coded bar should be always either placed in the foreground (if $\lvert \lambda_j \rvert<1$) or background (if $\lvert \lambda_j \rvert>1$).
\end{itemize}
Since the time step $k=0$ (here, highlighted by dark blue) is considered, the visualization is an extension and an improvement to the traditional one shown in Figure~\ref{pic_karman_old_ew_dom}. 
It allows for new insights into the influence of scaled modes and supports the understanding of the interplay.

\begin{figure*}[t]
	\centering
	\includegraphics[width=\linewidth]{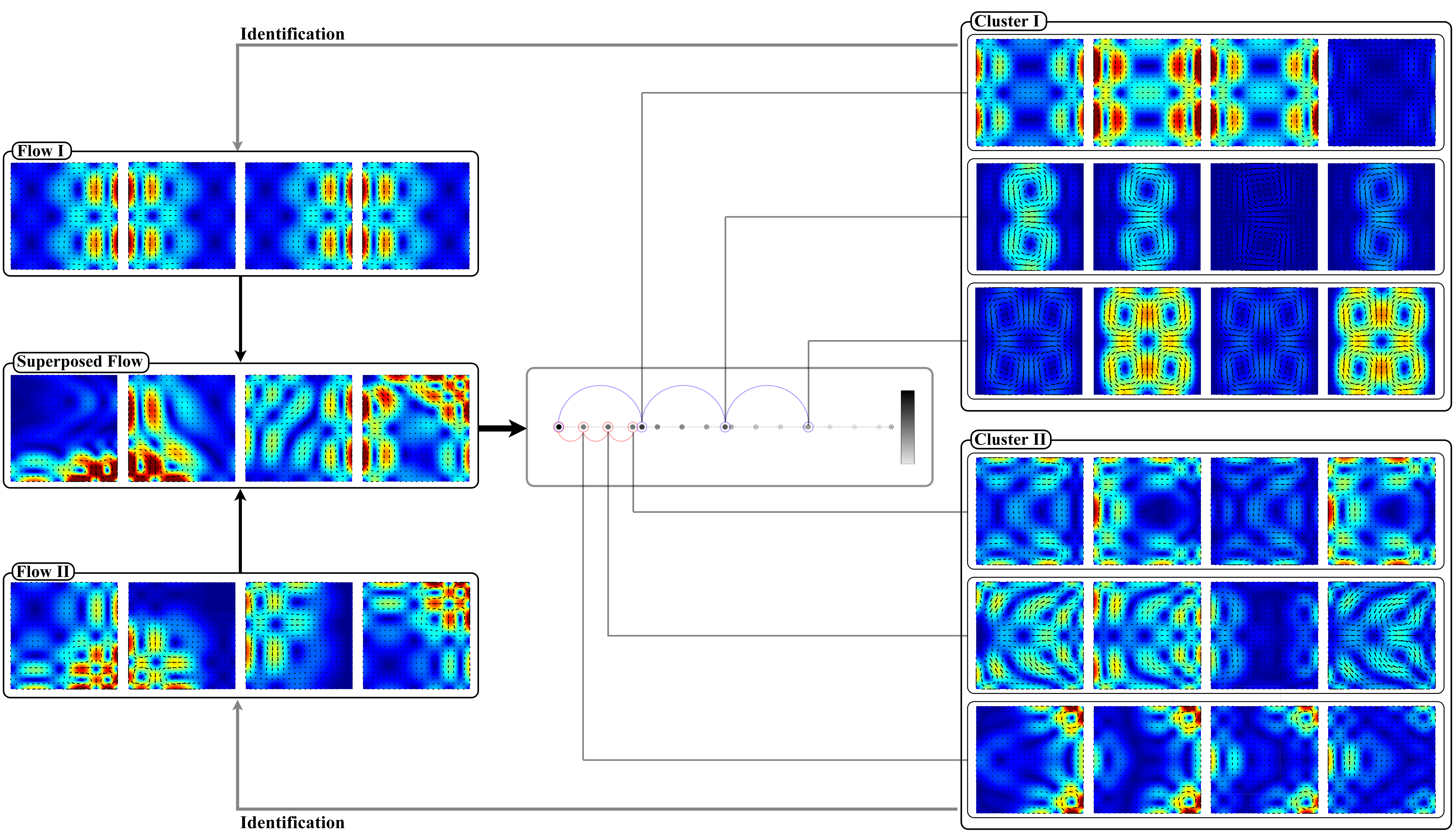}
	\caption{
	The procedure of the harmonic cluster approach applied to the superposed quadgyre data set:
	On the left, selected snapshots from the two unsteady flow fields (Flow I and II) are shown that characterize the superposed quadgyre.
	These data sets consist in each case of four vortices moving periodically from left to right (Flow I) or from top right to the bottom right in a crescent-shaped move (Flow II).
	The eigenvalue representation in the middle reveals two different frequency patterns that can be captured by the harmonic clustering approach.
	For each aggregation, the three most dominant modes are represented by their temporal development. 
	It can be observed that each aggregation characterizes one respective base flow, which verifies the usefulness of the cluster method.
	}
	\label{pic_superposed_quadgyre}
\end{figure*}

\subsection{Aggregation and Selection of Components}
DMD decouples time-dependent flow into spatial and temporal components.
For an appropriate selection of components, however, the spatio-temporal character of DMD needs to be taken into account.
Mathematically, a selection of components can be formalized by a subset $\mathcal{C} \subseteq \{1,\dots,r_0\}$ of all components.
Having defined a subset $\mathcal{C}$, we denote the temporal development of it at step $k$ as:
\begin{equation} \label{eq:cluster}
	\lambda^k_\mathcal{C}a_\mathcal{C}\vartheta_\mathcal{C} \coloneqq \sum_{i\in\mathcal{C}}{\lambda^k_i a_i \vartheta_i}~.
\end{equation}

If the traditional dominance-based approach for the selection of components is consulted, then the components are chosen according to the norms of scaled modes (or actually to the absolute values of the amplitudes).
Figure~\ref{pic_karman_reconst_and_cluster} (left) shows the reconstruction error over time (i.e., $\lVert x_k - \lambda^k_\mathcal{C}a_\mathcal{C}\vartheta_\mathcal{C} \rVert$ plotted for $k=0,\dots,m$) for three examples of subsets $\mathcal{C} \in \{\mathcal{C}_I,\mathcal{C}_{II},\mathcal{C}_{III}\}$.
The non-redundant eigenvalues belonging to the three subsets are illustrated in Figure~\ref{pic_karman_reconst_and_cluster} on the right  (compare Figure~\ref{pic_karman_new_ew_dom}).
The subset $\mathcal{C}_{III}$ contains all components and is the maximally achievable order of accuracy
(the reconstruction is exact, since the conditions of Equation~\ref{eq:theroem} are satisfied).
The other two subsets $\mathcal{C}_I$ and $\mathcal{C}_{II}$ contain the components with the $k$ highest influence, where different thresholds have been chosen.
However, this traditional approach for the selection of components neither clarifies the interplay of the chosen components nor classifies them appropriately.
In addition, due to the superposition principle, an incoherently selection may lead to components that eliminate each other.
In sum, incoherently selected (scaled) components may not describe any relevant spatial and temporal features of the flow.

Another approach for the selection of components can be performed on the basis of the improved components and visualizations. 
Since temporal aspects are encoded in the visualization of the dominance structure, it probably provides a better tool for the selection of components.
In general, it can be conducted in the following way (where we always select only non-redundant components): 
First, the components with a high influence at a certain time step $k_0$ are selected. 
This is achieved by selecting every component whose value $\lVert \lambda_j^{k_0} a_j \vartheta_j \rVert$ is higher than a chosen fixed threshold. 
If $k_0 = 0$, then the procedure is equal to the traditional one.
In a second step, undesirable components are sorted out manually.
Using the new dominance structure, non-selected components that represent important steady state parts can be added.
In contrast, selected components that vanish extremely fast can be excluded. 
With the traditional visualization (Figure~\ref{pic_karman_old_ew_dom} on the left), this is not possible, even if the eigenvalue visualization (Figure~\ref{pic_karman_old_ew_dom} on the right) is consulted additionally.
To assist this procedure, we keep attention to the reconstruction error over time (see Figure~\ref{pic_karman_reconst_and_cluster}) using the selected ones as well as all modes as a reference.
This shows the precision of the chosen components and helps substantiate the selection of components.

The proposed approach based on our visualizations is useful for a first impression or an explorative analysis.
Our experiences have shown that it is expedient for simpler data sets.
For complex systems that exhibit several different time-dependent phenomena, the interplay, interpretation, and classification of selected components still may remain unclear.  
So far, the selected (scaled) modes were classified by the location of their corresponding eigenvalues.
This can lead to the fact that components eliminate each other for certain sections of the flow. 
Therefore, an appropriate criterion for the selection of components is to classify those which represent a certain section of the flow accurately.
Mathematically, a discrete optimization problem can be formulated: For a certain time section $0 \leq k_1 < \dots, k_2 \leq m$, we look for a subset $\mathcal{C}_M$ with $0<M<r_0$ elements that satisfies
\begin{equation} \label{eq:measure}
\min_{\mathcal{C}_M} \sum_{k = k_1}^{k_2} \lVert x_k - \lambda^k_\mathcal{C}a_\mathcal{C}\vartheta_\mathcal{C} \rVert_2^2~.
\end{equation}
This technique can be obviously used for the selection of components, since each computed subset may characterize a section of the flow and the components its features.
For (almost) periodic data sets, the segmentation of the flow into sections is irrelevant, however, a classification into different base frequencies is of importance.
The discrete optimization problem (Equation~\ref{eq:measure}) can practically not be solved as the computational complexity grows with $\binom{r_0}{M}$ (for non-rank-deficient data $\binom{m}{M}$) for the selection of $M$ components.

Instead of using this time-consuming (PCA-like energy sorted) selection technique, we propose two fast clustering approaches for aggregating components that will detect the same components (as we will show later).
As shown above, the aggregated components reveal relevant physical features classifying the flow into segments such as the transient response or the steady state.
Subsequently, the components can be investigated individually with regard to the specific detected feature.

The following two clustering approaches operate on the eigenvalues and their new representation, i.e., we aggregate on the basis of temporal patterns.
Therefore, every aggregation include the complex conjugate counterpart of a components such that the temporal development should be evaluated using the real part as in Equation~\ref{eq:combining_counterparts}.

\paragraph{\textbf{Distance-Based Clustering}}
Several flow phenomena like damping processes are characterized by modes that exhibit very similar frequencies and growth rates. 
Therefore, we propose aggregating components with closely located eigenvalues.
For this procedure, we first thin out the potential eigenvalues such that the flow can be  reconstructed adequately by them.
Then, a distance-based clustering method is applied to those eigenvalues.
The methodology is highlighted in Figure~\ref{pic_karman_reconst_and_cluster} by (I).
It often leads to multiple clusters $C_{11} ,C_{12}, ...$ that may reveal features.
However, the eigenvalue $\lambda =1$ is always added manually to each cluster.
To check the relevance of a cluster, the temporal development is consulted.
For instance, the reconstruction error over time of cluster $C_{11}$ is shown in Figure~\ref{pic_overview} on the right.

\paragraph{\textbf{Harmonic Clustering}}
For the identification of temporal patterns, DFT uses harmonics, i.e., multiples of frequencies.
We adapt this concept and aggregate components that exhibit patterns of harmonics.
The clustering approach uses again an appropriately thinned out set of eigenvalues whose components reconstruct the flow adequately.
Then, we search for multiples in the eigenvalue representation, either manually or algorithmically.
The choice depends on the distribution and complexity of eigenvalues.
The process is demonstrated in Figure~\ref{pic_karman_reconst_and_cluster} by (II) (or Figure~\ref{pic_superposed_quadgyre}).
If there is more than one cluster, the eigenvalue $\lambda = 1$ should not be included in the cluster as constant parts are always mixed and not separable.
However, the dynamic behavior is separated by this approach, which is the key point. 
For instance, the reconstruction error over time of cluster $C_{21}$ is shown in Figure~\ref{pic_overview} on the right.

In sum, we propose using the two clustering approaches for the selection of components.
A selection consists of united clusters (found from the clustering approaches), each reveals features of the flow by inspecting the respective individual components.
This is due to the fact that the aggregations segment the flow into relevant sections and therefore classify the components.
In Figure~\ref{pic_overview}, the two clustering approaches are illustrated representing different phenomena of the flow.

\section{Results}
To demonstrate the usefulness of our proposed techniques, we apply them to unsteady flow fields to identify different frequency-based features.
The first example is a generated synthetic flow, called superposed quadgyre.
It is an overlay of two artificial 2D flow fields, called quadgyre, which are extensions of the double gyre~\cite{SHADDEN:2005:LCS} flow field
This scenario demonstrates the application of DMD to periodic flows and illustrates the correctnesss of our aggregation approaches.
The next example is a simulated von Karman vortex street that is a more complex unsteady flow resulting in an equilibrium state.
Our improved techniques allows us to identify the relevant components that are associated to the transient response and steady state.
Finally, we consider a 3D von Karman vortex street to show how the approach carries over to 3D.

\begin{figure}[t]
	\centering
	\includegraphics[width=0.9\columnwidth]{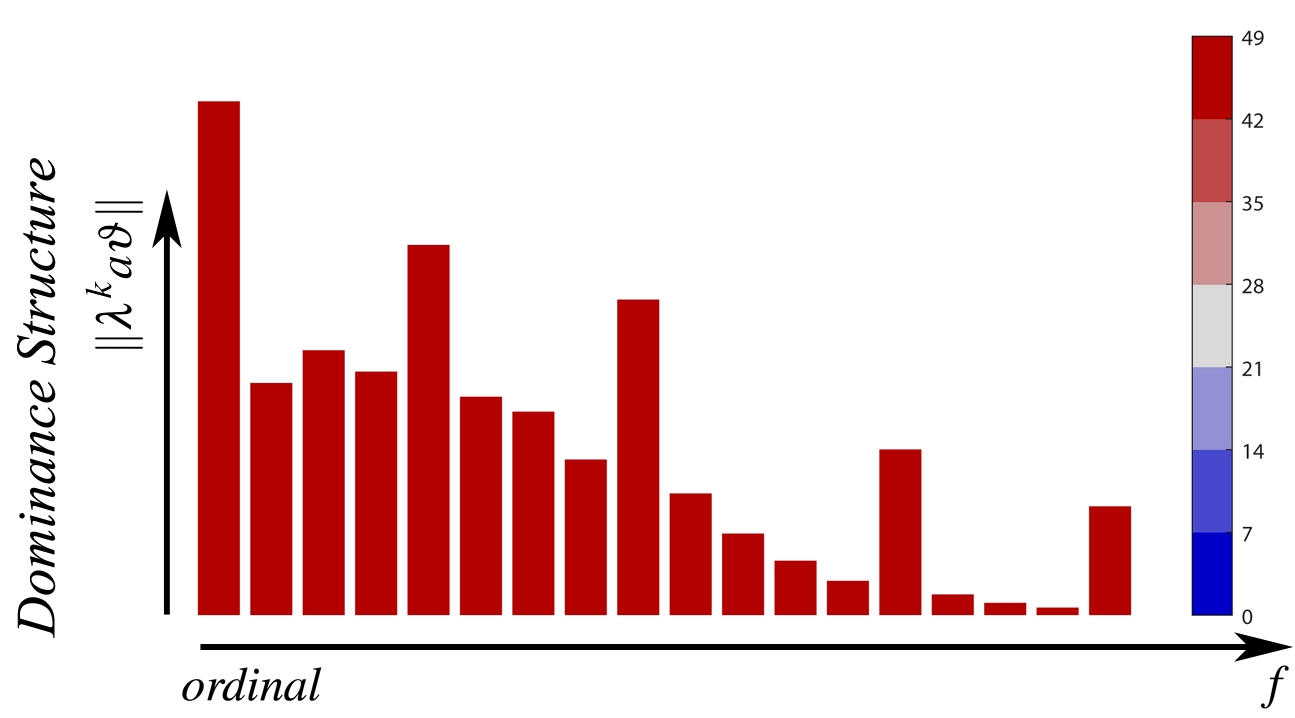}
	\caption{
	The dominance structure of the superposed quadgyre dataset. 
	Besides of decreasing behavior for higher frequencies, we immediately observe no converging and diverging parts.
	We conclude the periodicity of the flow.
	Moreover, the representation indicates the existence of two different decay patterns.
	}
	\label{pic_quadgyre_new_dom} 
\end{figure}

\subsection{Example 1: Superposed Quadgyre}
When applying DMD to overlapping periodic phenomena, such as the superposed quadgyre data set, we obtain characteristic features with DMD.
Our improved techniques identify these and the clusters represent each individual periodic phenomena as we will demonstrate in the following:
The superposed quadgyre data set is a superposition of two unsteady flow fields with different base movements and frequencies.
A full period of each flow is illustrated in Figure~\ref{pic_superposed_quadgyre} (left).
The analytical formula of both flows is given by
\begin{align*}
    u(x,y,t) &= -\pi A \sin{(\pi f(x,t))} \cos{(\pi g(y,t))} \frac{dg}{dy}(y,t)~, \\
    v(x,y,t) &=  \pi A \cos{(\pi f(x,t))} \sin{(\pi g(y,t))} \frac{df}{dx}(x,t)~,
\end{align*}
where $f(x,t) = \varepsilon \sin{(\omega_f t + s_f)}x^2 + x-2 \varepsilon \sin{(\omega_f t + s_f})x$  and $g(y,t) = \varepsilon\sin{(\frac{\omega_g}{2} t + s_g)}y^2 + y -2 \varepsilon \sin{(\frac{\omega_g}{2} t + s_g)}y$.
Both data sets consist of four vortices that move periodically. 
The vortices in the first dataset move right to left and back with parameters $A = 1$, $\varepsilon = 1$, $\omega_f = \frac{8}{27}\pi$, $\omega_g = 0$, $s_f = \frac{1}{5}\pi$, and $s_g = 0$.
The vortices from Flow II move from top right to  bottom right in a crescent-shaped motion with parameters $A = 1$, $\varepsilon = 1$, $\omega_f = \omega_g = \frac{1}{2}\pi$, and $s_f = s_g = \frac{1}{3}\pi$.
The flow fields are sampled at a resolution of 201 $\times$ 201 cells (resulting in a snapshot dimension of $n = 80802$) and have 50 time steps.

To analyze the flow decomposition, we first inspect the eigenvalue plot (Figure~\ref{pic_superposed_quadgyre}, center).
We observe that all eigenvalues have an absolute value of one, i.e., the flow is periodic and there are no converging or diverging phenomena.
This is confirmed by the dominance structure (Figure~\ref{pic_quadgyre_new_dom}) as all bars are colored red. 
The next step is to select relevant components. 
As mentioned before, for periodic systems, it is more of importance to classify the data into different base frequencies.
Hence, instead of using the traditional dominance-based approach for the selection of modes, we use the proposed harmonic clustering approach.
Two clusters $C_I$ and $C_{II}$ are detected accurately as demonstrated in Figure~\ref{pic_superposed_quadgyre}.
To verify the relevance and correctness of the two found aggregations, where each belong to a different base frequency phenomena, we compare each of them with the suitable analytical base flow.
More precisely, we consider the error between the temporal development of an aggregation (without the constant part) and the respective base flow, where we subtracted the mean.
The error plot is depicted in Figure~\ref{pic_quadgyre_reconst_2datasets}.
It can be observed that the relative error is approximately 4\% for both clusters.
Hence, each found cluster characterizes one of the base flows (which form the superposed quadgyre by superposition).
This shows that our clustering approach can extract and classify overlapping phenomena with different frequency patterns.
Therefore, the components of the two aggregations can now be analyzed separately from each other.

\begin{figure}[t]
	\centering
	\includegraphics[width=\columnwidth]{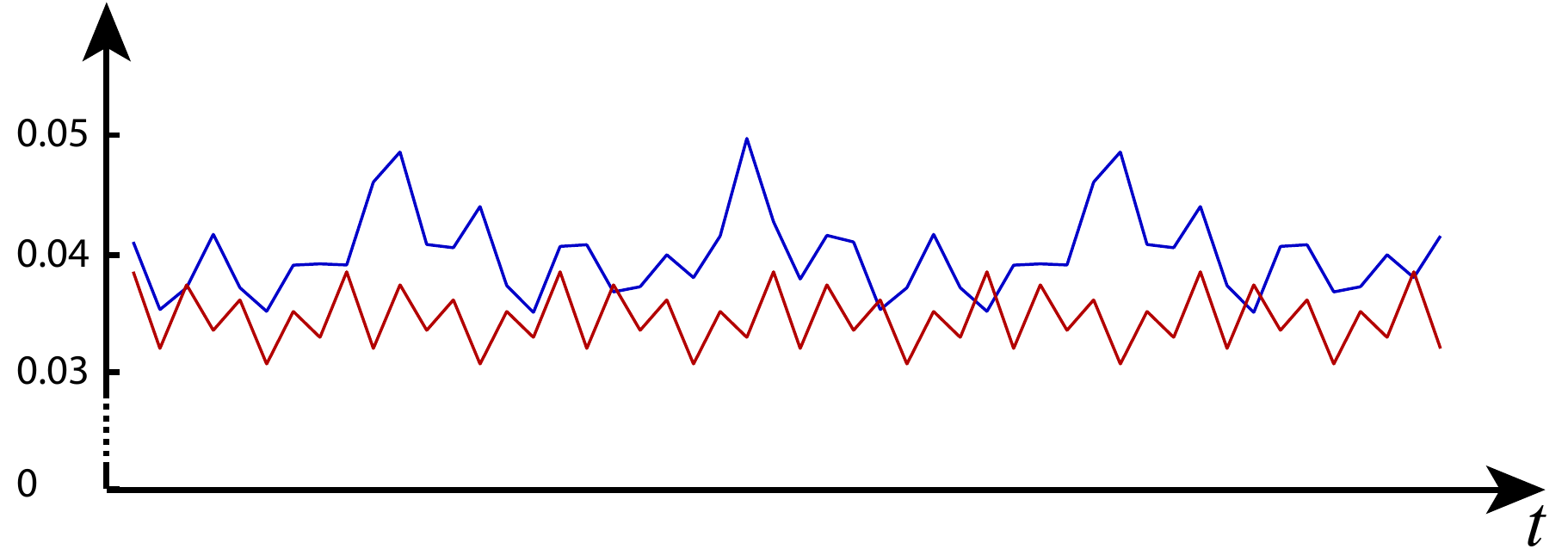}
	\caption{
	The plot shows the error between the temporal development of the two aggregations found by the harmonic clustering approach applied to the superposed quadgyre data set and the respective original flows, which are mean subtracted in order to eliminate constant parts (compare Figure~\ref{pic_superposed_quadgyre}).
	Since the error is very small, each cluster captures the dynamic behavior of one original flow accurately.
	}
	\label{pic_quadgyre_reconst_2datasets} 
\end{figure}

In Figure~\ref{pic_superposed_quadgyre} (right), the three most important modes of each cluster are depicted as well as their temporal development.
The upper aggregation belongs to the standard quadgyre.
A similar flow was investigated by Brunton et al. \cite{Brunton:2015:CompressedDMD}, however, our approach suppresses the negligible complex conjugate scaled modes.
The upper three modes show symmetric vanishing and recurring vortices indicating a periodic flow in the horizontal direction.
The lower three modes indicate a crescent shape movement of vortices.
The temporal development of the first mode shows the two vortices in the top and bottom right as well as the emerging vortex on the left. 
This vortex can be found in the other two modes as well. 
Whereas the second one mainly describes the crescent-shaped movement (since the frequency is twice as the base frequency), the last one highlights the vortices in the left upper and lower corners.
The temporal evolution of the modes (Figure~\ref{pic_superposed_quadgyre} right) shows that the symmetry properties of both systems are conserved.

If the DFT is consulted for such a frequency-based investigation, a classification like this would not work as the frequencies cannot be identified directly.
As the DFT uses uniformly distributed frequencies depending on the total number of snapshots, the mixed frequencies of the superposed quadgyre will not be determined.
Therefore, the frequency detection with DFT is blurred and none of the original flows is precisely detected.

\begin{figure}[t]
	\centering
	\includegraphics[width=\columnwidth]{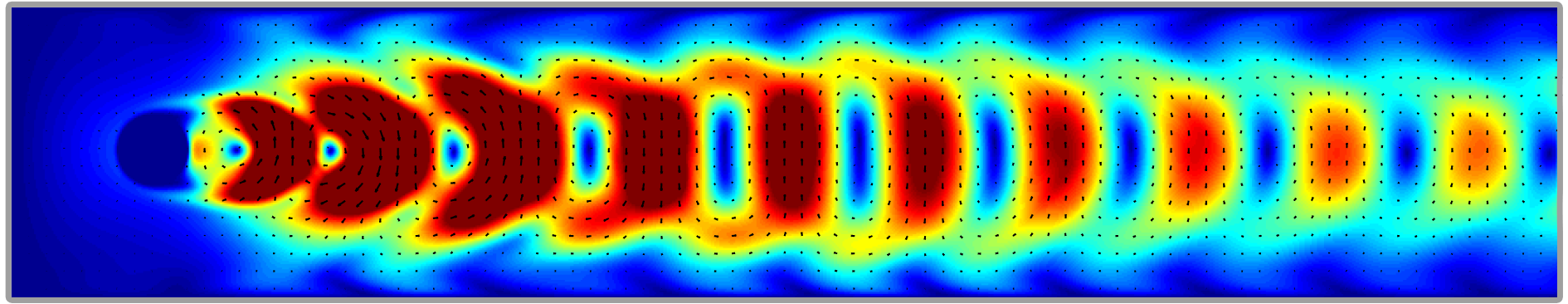}	\includegraphics[width=\columnwidth]{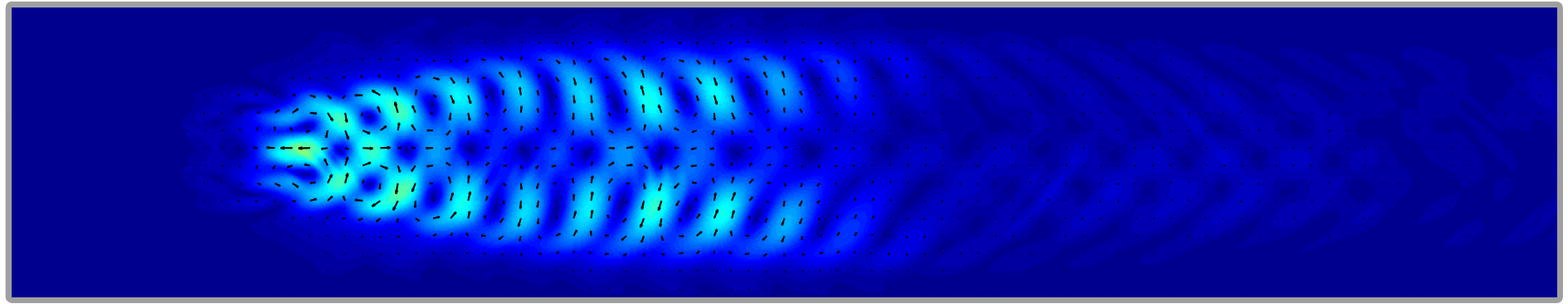}	\includegraphics[width=\columnwidth]{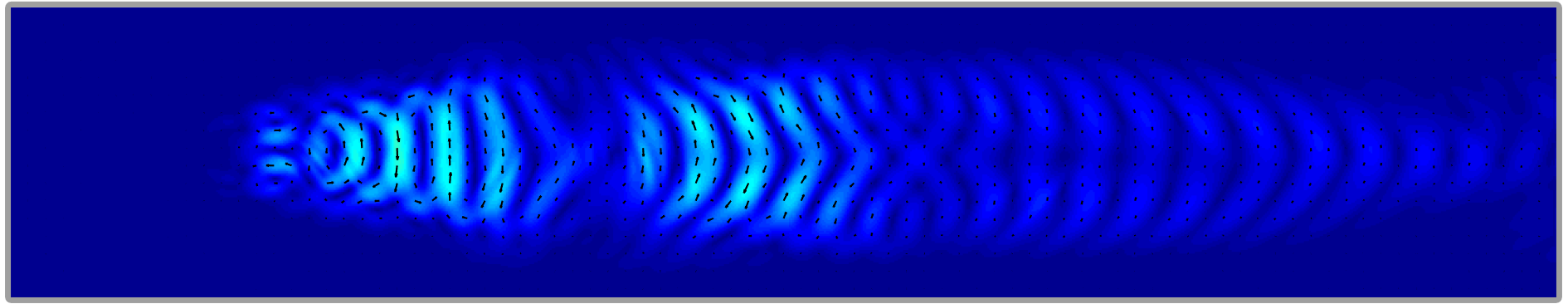}	\includegraphics[width=\columnwidth]{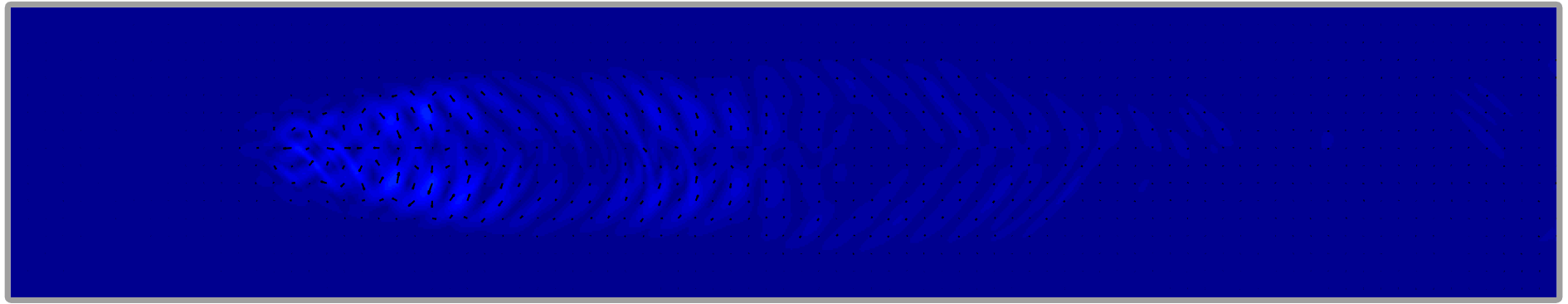}
	\caption{
		Visualization of scaled modes from the 2D Karman vortex street.
		These are selected from an aggregation that represent the steady state.
		The aggregation was found by the harmonic clustering approach.
	}
	\label{pic_karman_dmd_modes}
\end{figure}

\begin{figure}
	\centering
		\includegraphics[width=\columnwidth]{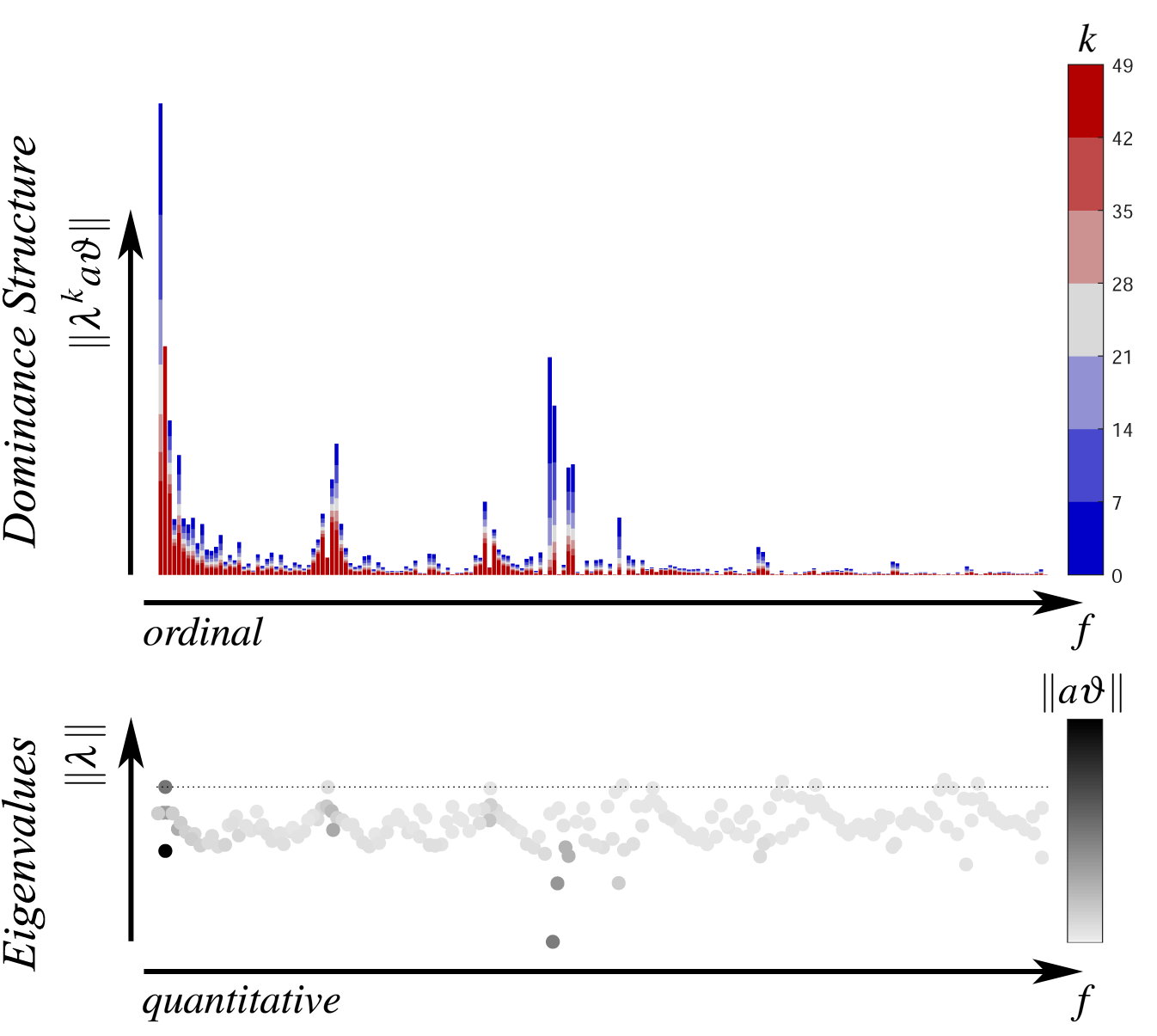}
	\caption{
		Our visualizations of the dominance structure and eigenvalues for the 3D Karman vortex street. 
		We observe the identical patterns as in the 2D counterpart.
		}
	\label{pic_karman3d_ew_dom} 
\end{figure}

\subsection{Example 2: 2D Karman Vortex Street}
To compare our improved components and visualizations with the traditional ones, a flow past a circular object forming a von Karman vortex street is investigated (see Figure~\ref{pic_overview} on top). 
This typical analysis example for DMD was simulated on a grid with 881 $\times $ 166 points.
Applying DMD to the 251 snapshots results in 250 DMD components.
In Figure~\ref{pic_karman_new_ew_dom}, the dominance structure and eigenvalues are visualized with our new approach.
In this example, we are faced with a rather complicated dominance structure.
It can be observed that a lot of scaled modes have a large influence on the entire system.
However, we can also immediately observe that nearly all components characterize damping phenomena, in particular those with a high impact at the beginning.
After about 50 time steps, many components have vanished and parts of the steady state (highlighted by pure red bars) will become more dominant.
Hence, by the temporal encoding in the dominance visualization, we get a precise understanding of the component's impact.
In addition, as the two axes of the dominance and eigenvalue visualization are linked, we observe a similar wave-shaped contour.
The traditional visualizations in Figure~\ref{pic_karman_old_ew_dom} do not detect or highlight any of these aspects.
Based on the observations of our improved visualizations, it may be possible to make an adequate selection of components, however, we want to demonstrate the usefulness of our cluster approaches for the selection of components.

In Figure~\ref{pic_karman_reconst_and_cluster}, the reconstruction error over time for different subsets of components is shown where the corresponding eigenvalues are represented on the right.
For the aggregation of components, we use both clustering approaches on these different representations that facilitate the detection of patterns.
The whole procedure is demonstrated in Figure~\ref{pic_karman_reconst_and_cluster} and the two chosen clusters $C_{1,1}$ and $C_{2,1}$ segment the flow into physically relevant parts.
Figure~\ref{pic_overview} shows the errors over time of the components belonging to the two aggregations C1,1 and C2,1.
For the first aggregation, we observe a small error at the beginning of the simulation.
Therefore, the transient response can be described by the individual components of C1,1, whereas aggregation C2,1 characterizes the steady state as the error decreases over time.
Thus, we have detected two important phenomena that classify the components.

A comprehensive time-continuous and -discrete analysis of wake flows with DMD was conducted by Bagheri \cite{bagheri:2013:KoopamnCylinderWake}.
In this context, a classification was achieved that equals the result of our cluster approaches.
To validate the cluster approaches once again, we computed the discrete minimization problem in Equation~\ref{eq:measure}.
We applied it both to a section in the beginning and in the end, where we search for a subset $\mathcal{C}_M$ with $M=4$ components.
The resulting components correspond to the ones found by the cluster approaches.
Consequently, these components are the best selection for the representation of the transient response in the beginning and the steady state in the end.

Tu et al.\cite{tu:2014:on_dmd_theorey_and_app} combine multiple experiments to detect harmonics of a von Karman vortex street.
They showed that in a simple experiment the patterns are harder to identify.
However, using our improved components and visualizations as well as the harmonic clustering approach, the patterns can be detected easily.
For the further analysis of these phenomena, we restrict the investigation to the components within these two aggregations.

The scaled mode as well as the traditional mode corresponding to the eigenvalue $\lambda=1$ (which is contained in both clusters) are compared in Figure~\ref{pic_dmd_mode_EW1_compare}. 
As mentioned before, the scaled mode represents the flow accurately, unlike the traditional mode.
Figure~\ref{pic_karman_dmd_modes} shows the four scaled modes from cluster C2,1. 
These components represent the characteristic Karman vortices by superposition.
In accordance with the sorting, the scaled modes reveal more and more fine scale patterns.
This is due to the fact that fine scale patterns are represented by higher frequencies.
Moreover, the four scaled modes are absolutely symmetrical (with regard to the y-axis), whereas the traditional modes do not exhibit these structures \cite{tu:2014:on_dmd_theorey_and_app}.

\subsection{Example 3: 3D Karman Vortex Street}
DMD is independent of the dimensionality of the spatial domain. 
For an analysis with DMD, we just need to adjust the visualization of the scaled modes to the dimensionality of the spatial domain.
Therefore, our investigation approach with DMD can be equally applied to 3D unsteady flow, except for the computational time.
To demonstrate this, we analyze a 3D von Karman vortex street representing an extension to the 2D counterpart. 
The grid has a resolution of $228 \times 44 \times 44$ and 381 snapshots were considered.
The flow field starts analogously as the 2D scenario in an emerging phase of the vortex street.

In Figure~\ref{pic_karman3d_ew_dom}, the dominance structure and eigenvalue visualization are depicted.
In both visualizations, we observe the same structure as in the 2D counterpart in Figure~\ref{pic_karman_new_ew_dom}.
An evaluation can be indeed conducted analogously.

For the investigation of spatial properties, we exemplarily visualize a mode in Figure~\ref{pic_karman3d_dmd_mode}.
Our visualization for the 3D scaled mode does not aim to highlight features, it is used to show similarities with the 2D counterpart.
We use a volume based representation as well as a cross section that shows the interaction inside.
The cross section exhibits similar structures as the modes from the 2D counterpart (Figure~\ref{pic_karman_dmd_modes}).

For 3D data sets, the computation time is affected moderately as only the SVD of the matrix $X$ and the computation of the amplitudes scales linearly with the additional dimension.
Furthermore, our visualizations and clustering approaches are only based on multiple calculations with the DMD eigenvalues.
Therefore, no overhead is produced.

\begin{figure}[tb]
	\centering
	\includegraphics[width=\columnwidth]{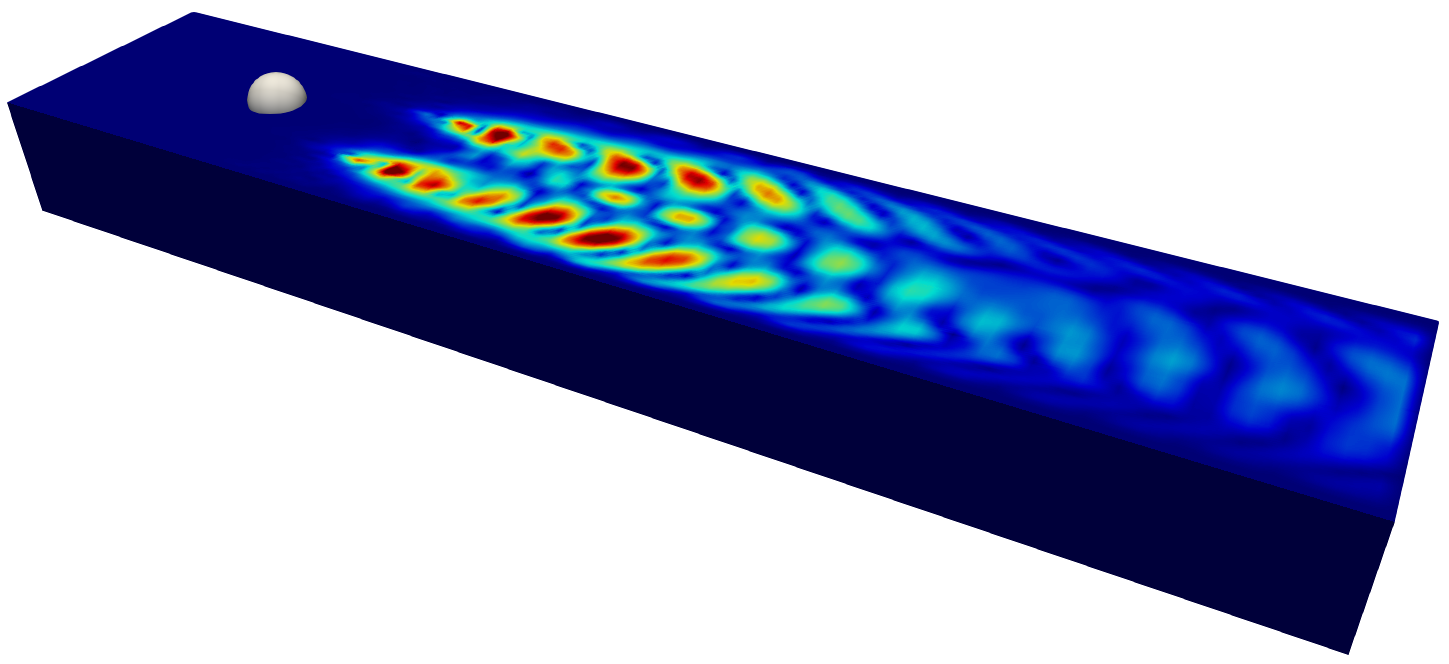}
	\caption{
		Visualization of a scaled mode for the 3D von Karman vortex street. 
		A volume-based representation with a cross section is used to show the connection to the 2D case.
		The frequency of the second 2D mode in Figure~\ref{pic_karman_dmd_modes} is approximately equal and, therefore, the same spatial behavior can be observed in the cross section.
		}
	\label{pic_karman3d_dmd_mode}
\end{figure}

\section{Conclusion}
In this paper, we have thoroughly recapped the mathematical foundation of DMD that allows us to combine and improve the DMD components such that the underlying physics is represented more adequately, e.g., by the construction of scaled modes.
Moreover, the interplay of the components is clarified and a new view on DMD is given by comparing it to DFT.
These new insights are used to design appropriate visualizations such that the spatio-temporal character of DMD is respected and redundant parts are hidden.
Therefore, a more adequate selection of components can be made and the identification of specific patterns and features is facilitated.
The two novel clustering approaches should be additionally consulted for the selection of components as these segment the flow into physically relevant sections.
In sum, a deeper understanding of the DMD components is gained that make DMD more accessible for users.

These new techniques may also be combined with other visualization methods that highlight further features in the components.
For instance, we could apply methods like FTLE to the temporal development of scaled modes.
In this case, FTLE could be seen as a post-processing step to gain another view on the spatial components.
Another interesting research direction might be the application of our techniques to other data.
So far, we only used data represented by a grid that consists of velocity components. 
It would be interesting to choose a different data representation, like particle-data, and evaluate different quantities, e.g., pressure or vorticity.
We plan to apply our proposed techniques to other (real-valued) data and evaluate the effectiveness for different applications again.
For example, the clustering approaches are only useful, if the data exhibits similar frequency patterns such as fluid flow.

\acknowledgments{
This work is partly supported by “Kooperatives Promotionskolleg Digital Media” at Hochschule der Medien and the University of Stuttgart.}

\bibliographystyle{abbrv-doi}
\bibliography{paper}

\section*{Appendix: Complex Conjugated DMD Components} \label{sec:appendix}

\begin{lemma} 
Let $\beta \in \mathbb{C}$ and $z \in \mathbb{C}^n$ with $\Re(z)$ and $\Im(z)$ are linearly independent. 
Then, the following assertion holds:
\begin{equation*}
	\beta \neq 1 \implies z + \beta \overline{z} \notin \mathbb{R}.
\end{equation*}
\end{lemma}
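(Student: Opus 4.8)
The plan is to argue by contraposition: rather than proving $\beta \neq 1 \implies z + \beta\overline{z} \notin \mathbb{R}^n$ directly, I would show that $z + \beta\overline{z} \in \mathbb{R}^n$ forces $\beta = 1$. The natural first step is to split every object into its real and imaginary parts. Writing $z = x + iy$ with $x = \Re(z)$ and $y = \Im(z)$ in $\mathbb{R}^n$, and $\beta = a + ib$ with $a,b \in \mathbb{R}$ so that $\overline{z} = x - iy$, reduces the whole statement to an elementary real-linear computation.

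Next I would expand the product $\beta\overline{z} = (a+ib)(x-iy)$ and collect $z + \beta\overline{z}$ into its real and imaginary parts. A direct computation gives
\begin{equation*}
z + \beta\overline{z} = \bigl((1+a)x + by\bigr) + i\bigl(bx + (1-a)y\bigr),
\end{equation*}
where both bracketed expressions are genuine vectors in $\mathbb{R}^n$, since $x,y$ are real vectors and $a,b$ are real scalars.

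The key observation is that $z + \beta\overline{z}$ lies in $\mathbb{R}^n$ exactly when its imaginary part vanishes, i.e. when $bx + (1-a)y = 0$. This is a real-linear relation between the two vectors $\Re(z)$ and $\Im(z)$. Invoking the hypothesis that $\Re(z)$ and $\Im(z)$ are linearly independent, the only possibility is $b = 0$ and $1-a = 0$, that is $a = 1$ and $b = 0$, which is precisely $\beta = 1$. Taking the contrapositive then yields the claim.

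I do not expect any genuine obstacle here; the argument is routine once the real/imaginary split is performed. The only point requiring a little care is the bookkeeping in expanding $(a+ib)(x-iy)$ and correctly identifying the imaginary part as the real-linear combination $bx + (1-a)y$ of $\Re(z)$ and $\Im(z)$ — it is exactly this combination, together with the linear-independence hypothesis, that drives the entire proof.
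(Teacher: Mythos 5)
Your proposal is correct and follows essentially the same route as the paper's own proof: expand $z+\beta\overline{z}$ into real and imaginary parts, argue by contraposition that $z+\beta\overline{z}\in\mathbb{R}^n$ forces the imaginary part $\Im(\beta)\Re(z)+(1-\Re(\beta))\Im(z)$ to vanish, and invoke the linear independence of $\Re(z)$ and $\Im(z)$ to conclude $\beta=1$. The only differences are notational ($a,b,x,y$ in place of $\Re(\beta),\Im(\beta),\Re(z),\Im(z)$).
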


\begin{proof}
First, the imaginary part of $z + \beta \overline{z}$ is calculated:
\begin{align*}
	z + \beta \overline{z} = &\Re(z) + i \Im(z) + (\Re(\beta) + i \Im(\beta)) (\Re(z) - i \Im(z)) \\
	= &\Re(z) + i \Im(z) + \Re(\beta) \Re(z) + \Im(\beta) \Im(z) \\
	&+ i \Im(\beta) \Re(z) - i \Re(\beta) \Im(z) \\
	= &[(1 + \Re(\beta))\Re(z) + \Im(\beta) \Im(z)] \\
	&+ i[\Im(\beta) \Re(z) + (1 - \Re(\beta)) \Im(z)].
\end{align*}
The proof is done by contraposition, i.e., let us assume that $z + \beta \overline{z} \in \mathbb{R}$. 
Then, the imaginary part has to equal zero, i.e.,
\begin{equation*}
	\Im(\beta_j) \cdot \Re(z_j) + (1 - \Re(\beta_j)) \cdot \Im(z_j) = 0.
\end{equation*}
Since $\Re(z)$ and $\Im(z)$ are linearly independent, we conclude that
	\begin{equation*}
	1-\Re(\beta_j) = 0~, \qquad \Im(\beta_j) = 0~,
	\end{equation*}
which implies $\beta_j = 1$. 
This proves the statement by contraposition.
\end{proof}

\begin{theorem}
Consider real-valued data $x_0,x_1,\dots,x_m \in \mathbb{R}^n$.
If both $x_0,\dots,x_{m-1}$ and $x_1,\dots,x_m$ are linearly independent and the DMD eigenvalues $\lambda_1,\dots,\lambda_m$ are distinct, then Algorithm~\ref{Algorithm_DMD} produces pairs of complex conjugate eigenvalues, i.e., if $\lambda \in \mathbb{C}\setminus \mathbb{R}$ is a pure complex DMD eigenvalue then the complex conjugate $\overline{\lambda}$ is a DMD eigenvalue, too, and the associated scaled DMD modes are given by $a \vartheta$ and $\overline{a \vartheta}$, respectively.
\end{theorem}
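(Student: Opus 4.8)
The plan is to exploit the fact that, for real-valued data, the entire DMD pipeline runs in real arithmetic up to the eigendecomposition, so that complexity enters only through the spectrum of a real matrix. First I would observe that the reduced SVD $X = U\Sigma V^*$ may be chosen real-valued when $X$ is real, so that by Equation~\ref{eq:matrixS} the low-dimensional matrix $S = U^* Y V \Sigma^{-1}$ is a product of real matrices and is therefore real. This single structural fact drives everything that follows.

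The eigenvalues and modes then come out almost immediately. Since $S$ is real, conjugating the eigen-relation $S v = \lambda v$ gives $S\overline{v} = \overline{\lambda}\,\overline{v}$; hence whenever $\lambda \in \mathbb{C}\setminus\mathbb{R}$ is an eigenvalue, so is $\overline{\lambda}$, with eigenvector $\overline{v}$, and by the distinctness hypothesis $\overline{\lambda} \neq \lambda$. Feeding $\overline{v}$ through the mode formula on line~7 of Algorithm~\ref{Algorithm_DMD} and using that $Y$, $V$, and $\Sigma$ are real yields $\vartheta_{\overline{\lambda}} = \frac{1}{\overline{\lambda}} Y V \Sigma^{-1}\overline{v} = \overline{\vartheta_\lambda}$, so the modes are conjugate as well.

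The real work is in the amplitudes, since $a = \Lambda^{-1}\Theta^+ x_1$ couples all components through the pseudoinverse of $\Theta$ and does not factor pairwise. Rather than tracking $\Theta^+$ directly, I would invoke the reconstruction property (Equation~\ref{eq:theroem}): for $k=1,\dots,m$ the scaled modes $C_j := a_j\vartheta_j$ satisfy $x_k = \sum_j \lambda_j^k C_j$. Because the $\lambda_j$ are distinct and nonzero, the associated Vandermonde-type system is invertible, so the $C_j$ are the unique vectors reproducing the data from these eigenvalues. Conjugating this identity and using that the $x_k$ are real rewrites it as $x_k = \sum_\ell \lambda_\ell^k \overline{C_{\sigma(\ell)}}$, where $\sigma$ is the involution pairing $\lambda_j$ with $\overline{\lambda_j}$; matching this against the original decomposition via uniqueness forces $C_{\sigma(\ell)} = \overline{C_\ell}$, that is, the scaled mode at $\overline{\lambda}$ equals $\overline{a\vartheta}$, as claimed.

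I expect the amplitude step to be the main obstacle, precisely because the amplitude definition is global rather than pairwise; the uniqueness route above sidesteps the pseudoinverse entirely and is self-contained. An alternative I would keep in reserve, and the one the surrounding development seems set up to support, is to derive from the mode conjugacy that the scaled mode at $\overline{\lambda}$ has the form $\beta\,\overline{a\vartheta}$ for some scalar $\beta$, and then to deploy the preceding Lemma with $z = a\vartheta$, whose real and imaginary parts are linearly independent for a genuinely complex mode: the reality of the reconstructed snapshots rules out every $\beta \neq 1$. Either route closes the argument, with the uniqueness route being the cleaner of the two.
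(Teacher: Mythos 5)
Your proof is correct, and its key step is genuinely different from the paper's. Both arguments share the opening move: for real data the SVD can be chosen real, so $S$ is real, its spectrum is closed under conjugation, and distinctness yields an involution $\sigma$ pairing $\lambda_j$ with $\overline{\lambda_j}$. From there the paper proceeds pairwise and locally: it writes the eigenvector associated with $\overline{\lambda_j}$ as $c_j\overline{v_j}$ (unimodular $c_j$), the amplitude as $b_j\overline{a_j}$, reduces the claim to $\beta_j = b_j c_j = 1$, and forces this with its dedicated Lemma ($z+\beta\overline{z}\notin\mathbb{R}$ when $\beta\neq 1$ and $\Re(z),\Im(z)$ are independent), applied to the reality of the single snapshot $x_1$ together with linear independence of the modes to rule out cross-pair cancellation of imaginary parts. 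You instead use the reconstruction identity at all steps $k=1,\dots,m$: the $m$ distinct nonzero eigenvalues (nonzero is guaranteed here, since under the hypotheses $r_0=r=m$) make the scaled Vandermonde system invertible, so the scaled modes $C_j=a_j\vartheta_j$ are the unique vectors reproducing the data; conjugating the identity and using reality of the $x_k$ exhibits $\overline{C_{\sigma(j)}}$ as another solution, and uniqueness forces $C_{\sigma(j)}=\overline{C_j}$ in one stroke. Your route buys the absence of any auxiliary lemma, no bookkeeping of scaling factors, and automatic insensitivity to eigenvector normalization --- indeed your intermediate claim $\vartheta_{\overline{\lambda}}=\overline{\vartheta_\lambda}$ is only true up to the arbitrary scalar in the algorithm's choice of eigenvector (it may return $c\,\overline{v}$ rather than $\overline{v}$), a point the paper must and does track via its factors $c_j$, but which your argument never touches because it speaks only of the products $a_j\vartheta_j$. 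What the paper's route buys in exchange is economy of input: it extracts the conclusion from the reality of one snapshot plus linear independence of the modes, and its Lemma isolates the exact algebraic mechanism behind the pairing, whereas you lean on the full time series. Both are complete proofs; yours is the cleaner of the two.
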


\begin{proof} 
We use the notation from Algorithm~\ref{Algorithm_DMD}. 
Since we consider real-valued data, the matrices $X,Y \in \mathbb{R}^{n \times m}$ will be real-valued.
For real-valued matrices the singular value decomposition can be chosen real-valued and hence the DMD matrix $S \in \mathbb{R}^{m \times m}$ is real-valued.
As a result, the eigenvalues of $S$ (which are the DMD eigenvalues) occur in complex conjugate pairs. 
Let $\lambda_1, \lambda_2 = \overline{\lambda_1},\lambda_3, \lambda_4 = \overline{\lambda_3}, \dots, \lambda_{2k-1}, \lambda_{2k} = \overline{\lambda_{2k-1}} \in \mathbb{C}\setminus \mathbb{R}$ be the complex conjugate pairs and $\lambda_{2k+1},\dots,\lambda_m \in \mathbb{R}$ the remaining real-valued eigenvalues. 
The corresponding eigenvectors are given by $v_1, v_2, \dots, v_{2k-1}, v_{2k}, v_{2k+1}, \dots, v_m$.
Since the eigenvalues are distinct, which implies a one-dimensional eigenspace, and occur in complex conjugate pairs, the following relations hold
\begin{equation*}
    v_2 = c_1 \overline{v_1},
    \qquad \dots \qquad
    v_{2k} = c_{2k-1} \overline{v_{2k-1}},
\end{equation*}
for some scaling factors $c_1, c_3, \dots, c_{2k-1} \in \mathbb{C}$ with $\lvert c_1 \rvert = \dots = \lvert c_{2k-1} \rvert = 1$.
	
The DMD modes are calculated by $\vartheta_j = \frac{1}{\lambda_j} Y V \Sigma^{-1} v_j$. 
Hence, the DMD modes maintain the structure of the eigenvectors, i.e., we can denote those analogously by $\vartheta_1,  \vartheta_2, \dots, \vartheta_{2k-1}, \vartheta_{2k}, \vartheta_{2k+1} \dots, \vartheta_m$ with
\begin{equation*}
	\vartheta_2 =  c_1 \overline{\vartheta_1},
	\qquad \dots \qquad
	\vartheta_{2k} = c_{2k-1} \overline{\vartheta_{2k-1}}.
\end{equation*}
	
Using the reconstruction property, i.e., Equation~\ref{eq:theroem} for the second time step, we get the following relationship:
\begin{align*}
	x_1
	&= \sum_{l=1}^{2k} \lambda_l a_l \vartheta_l 
	+  \sum_{l=2k+1}^m \lambda_l a_l \vartheta_l \\
	&= \sum_{l=1}^{k} \lambda_{2l-1} a_{2l-1} \vartheta_{2l-1} 
	+  \sum_{l=1}^{k} \lambda_{2l  } a_{2l  } \vartheta_{2l  } 
	+  \sum_{l=2k+1}^m \lambda_l a_l \vartheta_l \\	
	&= \sum_{l=1}^{k} \lambda_{2l-1} a_{2l-1} \vartheta_{2l-1} 
	+  \sum_{l=1}^{k} \overline{\lambda_{2l-1}} a_{2l} c_{2l-1} \overline{\vartheta_{2l-1}} 
	+  \sum_{l=2k+1}^m \lambda_l a_l \vartheta_l,
\end{align*}
where $a_1, a_2, \dots, a_m$ are the DMD amplitudes. 
As we assume real-valued data, which implies in particular $x_1 \in \mathbb{R}^n$, the above sum has to be real-valued. 
Let us express the DMD amplitudes $a_2, a_4 \dots, a_{2k}$ (belonging to the complex conjugate counterpart) as 
\begin{equation*}
    a_2 = b_1 \overline{a_1}, 
	\qquad 	\dots 	\qquad
	a_{2k} = b_{2k-1} \overline{a_{2k-1}},
\end{equation*}
for some appropriate scaling factors $b_1,b_3\dots,b_{2k-1} \in \mathbb{C}$. 
If we define $\beta_1 = b_1 c_1,\beta_3 = b_3 c_3, \dots, \beta_{2k-1} = b_{2k-1} c_{2k-1} \in \mathbb{C}$, we can express the above sum as
\begin{equation*}
       \sum_{l=1}^{k} \lambda_{2l-1} a_{2l-1} \vartheta_{2l-1} 
	+  \sum_{l=1}^{k} \beta_{2l-1} \overline{\lambda_{2l-1} a_{2l-1} \vartheta_{2l-1}}
	+  \sum_{l=2k+1}^m \lambda_l a_l \vartheta_l 
	\in \mathbb{R}^n.
\end{equation*}
Now, the proof is complete, if we are able to show that $\beta_1 = \beta_3 = \dots = \beta_{2k-1} = 1$, since every scaled DMD mode $a_{2l} \vartheta_{2l}$ (belonging to a DMD eigenvalue with a complex conjugate pair of eigenvalues) is given by
	\begin{equation*}
	a_{2j} \vartheta_{2j}
	= b_{2j-1} c_{2j-1} \overline{a_{2j-1} \vartheta_{2j-1}}
	= \beta_{2j-1} \overline{a_{2j-1} \vartheta_{2j-1}}
	= \overline{a_{2j-1} \vartheta_{2j-1}}.
	\end{equation*}
	
To prove the statement, consider the real-valued sum from above, however, for simplicity we use $z_j = \lambda_j a_j \vartheta_j$:
\begin{equation*}
       \sum_{l=1}^{k} z_{2l-1}
	+  \sum_{l=1}^{k} \beta_{2l-1} \overline{z_{2l-1}}
	+  \sum_{l=2k+1}^m z_l 
	\in \mathbb{R}^n.
\end{equation*}
Assume that there is at least one coefficient $\beta_{l_0} \neq 1$. 
Since the DMD modes are linearly independent (because the snapshots $x_1,\dots,x_m$ are linearly independent), the vectors $\vartheta_1, \overline{\vartheta_1}, \dots, \vartheta_{2k-1}, \overline{\vartheta_{2k-1}}, \vartheta_{2k+1}, \dots, \vartheta_m$ are linearly independent. 
In addition, a simple calculation shows that $\Re(\vartheta_1), \Im(\vartheta_1), \Re(\vartheta_3), \Im(\vartheta_3), \dots, \Re(\vartheta_{2k-1}), \Im(\vartheta_{2k-1}), \vartheta_{2k+1}, \dots, \vartheta_m$ are linearly independent as well.
Finally, $\Re(z_{j_0})$ and $\Im(z_{j_0})$ are linearly independent and by the previous proven Lemma, we conclude that $z_{j_0} + \beta_{j_0} \overline{z_{j_0}} \notin \mathbb{R}^n$.
However, there is no possibility to eliminate the upcoming imaginary part, though, as the full sum has to be real-valued. 
Consequently, the assumption is wrong and $\beta_1 = \dots = \beta_{2k-1} = 1$, which completes the proof.
\end{proof}

\end{document}